\title{Quantitative testing semantics for non-interleaving}
\author{Emmanuel Beffara}
\DeclareMathOperator\action{act}
\begin{document}
\maketitle

\begin{abstract} %[[[
  This paper presents a non-interleaving denotational semantics for the
  π-calculus.
  The basic idea is to define a notion of test where the outcome is not only
  whether a given process passes a given test, but also in how many different
  ways it can pass it.
  More abstractly, the set of possible outcomes for tests forms
  a semiring, and the set of process interpretations appears as a module over
  this semiring, in which basic syntactic constructs are affine operators.
  This notion of test leads to a trace semantics in which traces are
  partial orders, in the style of Mazurkiewicz traces, extended with readiness
  information.
  Our construction has standard may- and must-testing as special cases.
%   and can extend to more sophisticated features like probabilistic choice.
\end{abstract} % ]]]

\tableofcontents
%\todolist

\section{Introduction} %[[[

The theory of concurrency has developed several very different models for
processes, focusing on different aspects of computation.
Process calculi are an appealing framework for describing and analyzing
concurrent systems, because the formal language approach is well suited to
modular reasoning, allowing to study sophisticated systems by means of
abstract programming primitives for which powerful theoretical tools can be
developed.
However, the vast majority of the semantic studies on process calculi like the
π-calculus have focused on the so-called interleaving operational semantics,
which is the basic definition of the dynamic of a process: the interaction of
a program with its environment is reduced to possible sequences of
transitions, thus considering that parallel composition of program components
is merely an abstraction that represents all possible ways of combining
several sequential processes into one.

There is clearly something unsatisfactory in this state of things.
Although sophisticated theories have been established for interleaving
semantics, most of which are based on various forms of bisimulation, they
fundamentally forget the crucial (and obvious) fact that concurrent processes
are intended to model situations where some events may occur independently.
Attempts at recovering this notion of independence in existing theories have
been made, for instance in the form of subtle variations on bisimulation or by
fully abstract encodings of non-interleaving semantics into interleaving ones
(in particular in Sangiorgi's work on locality and
causality~\cite{sg96:interleaving,bs98:causality}).
More recently, the old idea of Winskel's interpretation of CCS in event
structures~\cite{win82:ccs} has been revisited by Crafa, Varacca and Yoshida
to provide an actually non-interleaving operational semantics for the
π-calculus, using extensions of event structures~\cite{cvy07:event}.

This paper presents an attempt at defining a semantics for the π-calculus that
is both non-inter\-leaving (sometimes called “truly concurrent”) and
denotational, in the sense that the internal dynamics of a process is hidden,
and only the part that is observable by other processes is kept.
These two requirements may seem contradictory: “denotational” as we mean it
leads to the definition of testing semantics, which in turn leads to trace
semantics, which is very interleaving in nature.
Indeed,
consider the prototypical case of $a|b$ versus $a.b+b.a$:
how is it possible to distinguish them when looking at their interactions?
Both can do $a$ then $b$ or $b$ then $a$, but in the first case the paths
$a.b$ and $b.a$ are in fact one same run since $a$ and $b$ are independent,
while in the second case they correspond to two actually different choices.
We solve the contradiction by elaborating on this simple idea: instead of
checking whether a given process passes a given test, we check in how many
different ways it can pass it.
The word “different” here refers to different choices being made in situations
of non-determinism, and not simply different orderings of the same actions.

The approach presented here follows previous work by the
author~\cite{bef08:apc} on the search for algebraical\-ly pleasant denotational
semantics of process calculi.
The first step was to introduce in the π-calculus an additive structure (a
formal sum with zero) that represents pure non-determinism, and this technique
proved efficient enough to provide a readiness trace semantics (in the style
of Olderog and Hoare~\cite{oh86:spec}) with a complete axiomatization of
equivalence for finite terms.
The second step presented here further extends the space of processes with
arbitrary linear combinations, giving a meaning to these combinations in terms
of quantitative testing.
This introduction of scalar coefficients was not possible in the interleaving
case, because of the combinatorial explosion that arose even when simply
composing independent traces; quotienting by homotopy is the proper solution
to this problem.
Growing the space of processes to get more algebraic structure is motivated by
the idea that better structured semantics gives cleaner mathematical
foundations for the object of study, in the hope that the obtained theory will
be reusable for different purposes and that it will benefit from existing
mathematical tools.

\paragraph{Outline.}

In section~\ref{sec-operational}, we define the calculus on which our study is
built: a finite form of the πI-calculus.
An non-interleaving operational semantics is defined as follows: transitions
are those of the standard calculus, decorated with the position of each
action involved in a given transition, so that transitions are independent if
they derive from actions at independent positions.
Two execution paths are then considered homotopic if they differ only by
permutation of independent actions.
This technique is a variant of proved transitions introduced by Boudol and
Castellani~\cite{bc88:ccs} and notably studied by Degano and
Priami~\cite{dp92:proved,dp99:non-interleaving}.

In section~\ref{sec-testing}, the notion of test is defined.
Outcomes are taken from a semiring $\K$ in which multiplication represents the
parallel composition of independent results and addition represents the
combination of outcomes from different (non homotopic) runs.
Processes are equivalent if they yield the same outcome in all contexts.
The space of process equivalence classes appears a $\K$-module, on which the
outcome is a linear form, and syntactic constructs are affine operators.

In section~\ref{sec-algebraic}, we derive a first denotational semantics of
processes as linear forms over this space, in a construction similar to that
of the theory of distributions.
This construction provides an abstract interpretation of recursive processes
without having to include them in the initial construction of tests.

In section~\ref{sec-traces}, we further describe the space of finite processes
by showing that every finite process is a linear combination of traces.
Our notion of trace is an asynchronous variant of the traces induced by
standard semantics: they are partially ordered finite sets of actions,
augmented with readiness information.
This provides a second, more concrete denotational semantics that illustrates
the expressiveness of our notion of test.

Finally, in section~\ref{sec-classic}, we show that standard forms of test are
particular cases of our construction, obtained by choosing an appropriate
semiring for outcomes.
% Another application to a notion of probabilistic testing is sketched as an
% illustration of potential extensions of our work.

\paragraph{Future and related works.}

The present work is by no means a complete study of quantitative testing
semantics and its possible applications, but rather a presentation of the
basic ideas and their consequences.
A first objective is to clarify the relationships between the two proposed
semantics, possibly by establishing that traces form a basis of the space of
processes (maybe by using a ring or field instead of a semiring for outcomes).
Another challenging direction for future work is using the linear-algebraic
interpretation for specification of processes, using tools like differential
equations to specify behaviours; this should provide a reconstruction of the
semantics on arbitrary vector spaces instead of the concrete space of
processes, which could be a way to a new family of denotational semantics for
process calculi.

Along with this long-term ideas, it is naturally interesting to extend our
work with more features in the calculus.
A more precise account of recursion is a desirable thing: surely infinite
behaviours fit in our framework, but the present work does not study it in
full detail for lack of space.
External choice is a natural feature to add in the framework, but previous
work~\cite{bef08:apc} suggests that it is painless.
Unrestricted name passing, on the other hand, is a more delicate matter, and
we believe that getting a satisfactory understanding of the more regular case
of internal mobility first is necessary to handle it.

Several works by other authors are related to the present work.
Crafa, Varacca and Yoshida's event structure semantics probably has very
strong relationships with our trace semantics: it has to be expected that
their event structures can be used as an intermediate between the process
calculus and the traces, and that traces and outcomes can be deduced from
configurations of the event structures.
The operational semantics and its similarity to Mazurkiewicz traces also
suggests that relations could be made with more abstract semantics, like
Melliès and Mimram's asynchronous
games~\cite{mel04:innocence,mm08:concurrent}.
Previous work on the search for algebraic semantics of processes include
Boreale and Gadducci's processes as formal series~\cite{bg06:series}, which
has notable similarities with the present work, although their work is carried
out in CSP.
Finally, strong relationships are expected with differential interaction
nets~\cite{er04:diffnets,el07:pidiff}, which have linear algebraic semantics
and are expressive enough to encode the π-calculus.

\section{Parallel operational semantics} %[[[1
\label{sec-operational}

We consider the π-calculus with internal mobility, or πI-calculus, extended
with a parallel composition without interaction and with \emph{outcomes} from
a commutative semiring~$\K$.
We consider the monadic variant of the calculus for simplicity, but using the
polyadic form would not pose any significant problem.
The most important point is that we consider finite processes, without
recursion, for the construction of our framework, and we handle potentially
infinite behaviours in a second phase in section~\ref{sec-algebraic}.
\begin{definition}\label{def-terms}
  We assume a countable set $\Names$ of names.
  Polarities are elements of $\Pola=\implem{\shpos,\shneg}$.
  Terms are generated by the following grammar:
  \begin{syntax}
  \define[actions] α
    \case u^ε(x) \comment{with $u,x∈\Names$ and $ε∈\Pola$}
  \define[processes] P, Q
    \case k  \comment{outcome, with $k∈\K$}
    \case α.P  \comment{action}
    \case \place P  \comment{performed action}
    \case P|Q  \comment{parallel composition with interaction}
    \case P\para Q  \comment{parallel composition without interaction}
    \case\new{x}P  \comment{hiding}
  \end{syntax}
  Terms are considered up to injective renaming of bound names and commutation
  of restrictions, i.e. $\new{x}\new{y}P=\new{y}\new{x}P$, with the standard
  convention that all bound names are distinct from all other names.
\end{definition}

The parallel composition without interaction allows us to write a term like
$a\para\bar{a}$ which can perform the dual actions $a$ and $\bar{a}$
independently but does not allow them to synchronize — this slightly extends
the expressiveness of the calculus but not in a dramatic way, and it
simplifies the theory.

The prefix $\place$ represents an action that already occurred.
It has no computational meaning but has the effect that the positions of
actions in the terms are preserved when reducing, which will simplify
definitions below.

We want to define an operational semantics in which commutation of independent
transitions is allowed.
In order to make this possible by only looking at transition labels, we have
to enrich the labels so that different occurrences of a given action are
distinguishable.
We do this by simply introducing in each label the positions in the syntax
tree of all actions involved (as a consequence, the operational semantics
cannot be defined up to structural congruence).
\begin{definition}
  A position is a finite sequence of integers.
  The concatenation of $ι$ and $κ$ is written $ι.κ$, the empty position is
  written $ε$.
  The prefix order is written $≤$ and two positions $ι$ and $κ$ are
  independent (written $ι\indep κ$) if they are incomparable.
\end{definition}

\begin{definition}
  Transition labels can be of one of two kinds:
  \begin{syntax}
  \define a, b
    \case u^ε(x):ι \comment{visible action}
    \case (ι,κ)  \comment{internal transition}
  \end{syntax}
  For a label $a$ and a position $ι$, $ι.a$ denotes the label $a$ where each
  position $κ$ is replaced with $ι.κ$.
  Transitions are derived by the rules of table \ref{table-transitions}.

  An interaction is finite sequence of transition labels.
  A path is a finite sequence of internal transition labels.
  An interaction $p=a_1a_2…a_n$ is valid for $P$, written $p∈P$, if there are
  valid transitions $P\trans{a_1}P_1\trans{a_2}\cdots\trans{a_n}P_n$.
\end{definition}
\begin{table}
  \centering
    \RULE0{ α.P \trans{α:ε} \place P }
    \DISP
  \hfil
    \AX{ P \trans{a} P' }
    \RULE1{ \place P \trans{1.a} \place P' }
    \DISP
  \hfil
    \AX{ P \trans{u^ε(x):ι} P' }
    \AX{ Q \trans{u^{¬ε}(y):κ} Q' }
    \RULE2{ P|Q \trans{(1.ι,2.κ)} \new{x}(P'|Q'[x/y]) }
    \DISP
  \hfil
    \AX{ P \trans{a} P' }
    \AX{ x ∉ a }
    \RULE2{ \new{x}P \trans{a} \new{x}P' }
    \DISP
  \hfil
    \AX{ P \trans{a} P' }
    \RULE1{ P|Q \trans{1.a} P'|Q }
    \DISP
  \hfil
    \AX{ P \trans{a} P' }
    \RULE1{ Q|P \trans{2.a} Q|P' }
    \DISP
  \hfil
    \AX{ P \trans{a} P' }
    \RULE1{ P \para Q \trans{1.a} P' \para Q }
    \DISP
  \hfil
    \AX{ P \trans{a} P' }
    \RULE1{ Q \para P \trans{2.a} Q \para P' }
    \DISP
  \smallskip
  \caption{Transition rules}
  \label{table-transitions}
\end{table}

This technique can be seen as a version of Boudol and Castellani's proved
transitions~\cite{bc88:ccs,dp99:non-interleaving} simplified for our purpose.
It is clear that for all term $P$ and interaction $p∈P$, there is exactly one
term $P/p$ such that there is a transition sequence $P\trans{p}P/p$ (up to
renaming of revealed bound names).
Remark that by removing all positions from labels (replacing $(ι,κ)$ by $τ$)
one gets the standard labeled transition system for the πI-calculus.

\begin{definition}\label{def-homotopy}
  Two labels $a$ and $b$ are independent (written $a\indep b$) if all
  positions in $a$ are independent of all positions in $b$.
  Homotopy is the smallest congruence $\homo$ over paths such that
  $ab\homo ba$ when $a\indep b$.
\end{definition}

Two execution paths of a given term are homotopic if it is possible to
transform one into the other by exchanging consecutive transitions if they are
independent.
Prefixing generates local constraints which propagate to paths by this
relation.
A first remark is that transition labels contain enough information so that
homotopy does not depend on the term in which paths are taken.

\begin{proposition}\label{prop-homotopy}
  For all term $P$ and all interactions $p,q$ such that $p\homo q$, $p∈P$ if
  and only if $q∈P$, and then $P/p=P/q$.
\end{proposition}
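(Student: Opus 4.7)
The plan is to reduce $\homo$ to its generating elementary swaps and then prove a local diamond property by structural induction. Since $\homo$ is the smallest congruence containing $ab \homo ba$ for $a \indep b$, an induction on the number of elementary swaps relating $p$ to $q$ reduces the statement to the single-step case $p = rabs$, $q = rbas$ with $a \indep b$. Writing $P \trans{r} P_r$, the common prefix $r$ yields the same intermediate state on both sides (by the uniqueness of $P/r$ noted right after the definitions), and the common suffix $s$ will continue identically from whatever term is reached via $ab$ once we show that this term coincides with the one reached via $ba$. The problem therefore reduces to the local diamond: if $P_r \trans{a} P_1 \trans{b} P_2$ with $a \indep b$, then there exists $P_1'$ such that $P_r \trans{b} P_1' \trans{a} P_2$.

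I would prove this diamond by induction on the structure of $P_r$, with a case analysis on the last rule applied in the derivation of the first transition. The guiding principle is that the positions carried in a transition label faithfully record the path from the root of $P_r$ down to the redex; two labels are independent precisely when those paths diverge at a parallel node ($\cdot|\cdot$ or $\cdot\para\cdot$), so the two redexes lie in disjoint sub-terms and their firings commute. The case $P_r = \alpha.P$ is vacuous, since the position $\epsilon$ produced is a prefix of every position, so no subsequent label can be independent of $a$. The cases $P_r = \place P$, $P_r = P|Q$ (non-communicating rules), $P_r = P\para Q$ and $P_r = \new{x}P$ all reduce to the induction hypothesis after stripping the common leading integer of the positions involved and, in the restriction case, using the side condition $x \notin a$ to permute $\new{x}$ with the residual transition.

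The main obstacle is the communication rule $P|Q \trans{(1.\iota,\,2.\kappa)} \new{x}(P'|Q'[x/y])$, whose label simultaneously touches both sides of the parallel composition and introduces a fresh binder and a substitution. When $b$ fires inside either $P'$ or $Q'[x/y]$, one must check that the freshly bound $x$ does not collide with names occurring in $b$ (ensured by the global convention that bound names are distinct from all other names) and that the outer $\new{x}$ can be commuted past any $\new{}$ introduced when firing $b$; the quotient by commutation of restrictions stated in Definition~\ref{def-terms} is what guarantees that the two paths $ab$ and $ba$ reach the \emph{same} term rather than merely $\alpha$-equivalent ones. Once the local diamond is established, the proposition follows by chaining elementary swaps, both directions of ``$p \in P \Leftrightarrow q \in P$'' being given by symmetry of the diamond.
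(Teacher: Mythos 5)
Your proposal is correct and follows essentially the same route as the paper: reduce homotopy to elementary swaps, prove the one-step diamond by induction on the (derivation of the) first transition, dismiss the prefix-action case because $\epsilon$ is comparable to every position, and treat the communication rule as the delicate case using independence of the substitutions and commutation of restrictions. The only cosmetic difference is that you make the common prefix $r$ and suffix $s$ around the swapped pair explicit, where the paper simply says the general case follows.
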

\begin{proof}
  The basic case is $p=ab$ and $q=ba$ for some $a$ and $b$ with $a\indep b$.
  We thus prove that for any pair of transitions $P\trans{a}Q\trans{b}R$ with
  $a\indep b$, there is a term $Q'$ such that $P\trans{b}Q'\trans{a}R$.
  Proceed by induction on the derivation of $P\trans{a}Q$.
  \begin{itemize}
  \item 
    The case of the action rule $α.P\trans{α:ε}\place P$ cannot happen since no
    position is independent of $ε$ but $b$ is supposed to be independent of
    $α:ε$.
  \item 
    In the case of the $\place$ rule, we have $\place P\trans{1.a}\place Q\trans{1.b}\place R$ and
    we can proceed by induction on $P\trans{a}Q\trans{b}R$.
  \item 
    In the case of the interaction rule, we have $a=(1.ι,2.κ)$ for some
    positions $ι$ and $κ$, and the second transition starts from
    $\new{x}(P'|Q'[x/y])$.
    Reason by case analysis on the shape of this second transition.
    If all positions in $b$ start with $1$, then $b=1.b'$ for some $b'$ and
    the second transition comes from $P'\trans{b'}P''$, so we can apply the
    induction hypothesis on $P\trans{u^ε(x):1.ι}P'\trans{b'}P''$ to get
    transitions $P\trans{b'}R\trans{u^ε(x):1.ι}P''$, from which we deduce
    $P|Q\trans{b}R|Q\trans{a}\new{x}(P''|Q')$.
    If all positions in $b$ start with $2$, the same argument applies, the
    substitution of $x$ for $y$ is innocuous since it only affects actions
    prefixed by $u^{¬ε}(y)$.
    If $b=(1.ι',2.κ')$ for some $ι'$ and $κ'$, then we have
    $P'\trans{v^η(x')ι'}P''$ and $Q'\trans{v^{¬η}(y')κ'}Q''$, so we
    can apply the induction hypothesis on $P$ and $Q$ independently, from
    which we deduce
    $P|Q\trans{b}\new{x'}(P'''|Q'''[x'/y'])
    \trans{a}\new{x'}\new{x}(P''|Q''[x'/y',x/y])$,
    which concludes this case since the substitutions $[x/y]$ and $[x'/y']$
    are independent and the order of restrictions is irrelevant.
  \item 
    The $\new{x}$ context rule is obvious.
  \item 
    In the right context rule for parallel composition, we have $a=1.a'$ for
    some $a'$.
    If $b=2.b'$ for some $b'$, then $a$ occurs in $P$ and $b$ occurs in $Q$,
    so they obviously commute.
    If $b$ has the form $α:1.ι$, then we proceed by induction in $P$.
    If $b$ is a label $(ι,κ)$ with one of $ι,κ$ starting with $1$, then we
    proceed by induction on the visible action at this position, in a
    similar way as for parallel composition.
  \item 
    The other context rules for composition with and without
    interaction are similar.
  \end{itemize}
  The general case follows.
\end{proof}

\begin{definition}
  A pre-trace is a homotopy class of interactions.
  A run is a homotopy class of maximal paths.
  The sets of pre-traces and runs of a term $P$ are written $\Pretraces(P)$
  and $\Runs(P)$ respectively.
  The unique reduct of a term $P$ by a pre-trace $ρ$ is written $P/ρ$.
\end{definition}

Runs are the intended operational semantics: they are complete
executions of a given system, forgetting unimportant interleaving of actions
and remembering only actual ordering constraints.
A pre-trace can be seen as a Mazurkiewicz trace on the infinite language of
transition labels, with the independence relation from
definition~\ref{def-homotopy}, except that, because of our transition rules
(and because of the use of the place-holder $\place$), each label occurs at
most once in any interaction.
A crucial fact is that pre-traces are uniquely defined by the set of
their labels:
\begin{proposition}\label{prop-permutation}
  Let $p$ and $q$ be two interactions of a term $P$ such that $p$ and $q$ are
  permutations of each other, then $p\homo q$.
\end{proposition}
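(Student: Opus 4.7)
The plan is to induct on the common length $n$ of $p$ and $q$. The case $n=0$ is trivial. For the step, write $p = a_1 p'$ with $a_1$ initially executable from $P$. Since each label carries its positions and firing replaces the corresponding prefix by $\place$, no label occurs twice in a valid interaction, so there is a unique index $k$ with $q = b_1 \cdots b_{k-1}\, a_1\, b_{k+1} \cdots b_n$.

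The crux is to establish $a_1 \indep b_j$ for every $j < k$, as this lets one bubble $a_1$ to the front of $q$ by iterated base swaps, giving $q \homo a_1\, b_1 \cdots b_{k-1}\, b_{k+1} \cdots b_n$. I would do this by a positional analysis. Every position appearing in a label of a valid interaction of $P$ coincides with the position of an action prefix already present in $P$, because reduction only rewrites $\alpha.Q$ to $\place Q$ at the fired position and the $\new{x}$ introduced by the interaction rule is transparent for positions. Moreover, initial executability of $a_1$ means that in $P$, the path from the root to each position $\iota$ of $a_1$ traverses only non-prefixing constructs. Given a position $\iota'$ occurring in $b_j$, one then excludes every way $\iota$ and $\iota'$ could be comparable: if $\iota'$ is a strict prefix of $\iota$, the action prefix at $\iota'$ in $P$ would block access to $\iota$, contradicting executability of $a_1$; if $\iota$ is a strict prefix of $\iota'$, the action prefix at $\iota$ still stands when $b_j$ fires (since $a_1$ has not yet occurred) and blocks $b_j$; and $\iota = \iota'$ is impossible because after $b_j$ the node at $\iota$ has become $\place$, preventing $a_1$ from firing there later.

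With this commutation in hand, Proposition~\ref{prop-homotopy} makes $b_1 \cdots b_{k-1}\, b_{k+1} \cdots b_n$ a valid interaction of $P/a_1$, and $p' = a_2 \cdots a_n$ is one as well; the two are permutations of each other of length $n-1$, so induction gives them homotopic, and prepending $a_1$ closes the argument. I expect the main obstacle to be the positional analysis itself: one must be rigorous about the operational meaning of positions and verify carefully that positions of pending actions are genuinely preserved across all transition rules, notably the interaction rule with its fresh $\new{x}$. Once that stability is granted, the remainder is the standard bubble-sort commutation argument familiar from Mazurkiewicz trace theory.
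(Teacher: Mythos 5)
Your proof is correct and takes essentially the same route as the paper's: both arguments reduce the general permutation case to bubbling a single initially-enabled label to the front of the other interaction, justified by exactly the same two-sided positional analysis (a label firing earlier in one ordering can be neither above nor below an initially-enabled label in the prefix order), and then recurse on the reduct $P/a_1$. The only difference is presentational — the paper states the key lemma as moving the \emph{last} label of a prefix of $p$ to the front, whereas you move the \emph{first} label of $p$ to the front of $q$ — and your explicit treatment of the equal-position case and of the stability of positions under reduction only makes the argument more careful than the paper's.
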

\begin{proof}
  We first prove that for all interaction $a_1…a_nb∈P$ such that $b∈P$ we have
  $a_1…a_nb\homo ba_1…a_n$, by induction on $n$.
  The case $n=0$ is trivial.
  For the case $n≥1$, remark that the hypothesis implies $a_1\indep b$: if
  some position in $a_1$ was less than a position in $b$ then $b$ could only
  occur after $a_1$, which contradicts $b∈P$, and $a_1∈P$ also implies that no
  position in $b$ is less than a position in $a$.
  Therefore we have $ba_1∈P$ and $ba_1\homo a_1b$.
  Applying the induction hypothesis on $P/a_1$ yields $ba_2…a_n\homo a_2…a_nb$
  from which we conclude.
  The case of arbitrary permutations follows by recurrence on the length of
  $p$ and $q$.
\end{proof}

\begin{definition}\label{def-causal-order}
  Let $P$ be a term and $ρ∈\Pretraces(P)$.
  By proposition~\ref{prop-permutation}, $ρ$ is identified with the set of its
  labels.
  The causal order in $ρ$ is the partial order $≤_ρ$ on labels in $ρ$ such
  that $a≤_ρb$ if $a=b$ or $a$ occurs before $b$ in all interactions in~$ρ$.
\end{definition}

This presentation is much simpler to handle than
explicit sets of runs, so this is the one we will mainly use.
Interactions that constitute a given pre-trace are simply the topological
orderings of this partially ordered set of transitions.
Traces are a further quotient of pre-traces, defined and
studied in section~\ref{sec-traces}.

\section{Quantitative testing} %[[[1
\label{sec-testing}

We now define a form of observation based on interaction, in the style of
testing equivalences, that takes homotopy into account.
Standard testing naturally leads to interleaving semantics, so we have to
refine our notion of test, and that is what outcomes are for.
The set $\K$ is a semiring in order to
represent two ways of combining results: the product is the parallel
composition of independent results and the sum is the combination of
results from distinct runs.

\begin{definition}
  The state $s(P)∈\K$ of a term $P$ is the product of all outcomes in active
  position in $P$:
  \begin{center}
    $ s(k) := k $, \hfil
    $ s(α.P) := 1 $, \hfil
    $ s(\place P) := s(\new{x}P) := s(P) $, \hfil
    $ s(P|Q) := s(P\para Q) := s(P) \, s(Q) $.
  \end{center}
  The outcome of a term $P$ is $\outcome{P}=\sum_{ρ∈\Runs(P)}s(P/ρ)$.
  Two terms $P$ and $Q$ are observationally equivalent, written $P≃Q$, if
  $\outcome{P|R}=\outcome{Q|R}$ for all~$R$.
% 
%   A process is an equivalence class of terms, the class of a term $P$ is
%   identified with the function $[P]:Q\mapsto \outcome{P|Q}$.
%   Let $\Procs_\K$ be the set of processes over $\K$.
\end{definition}

Classic forms of test intuitively correspond to the case where $\K$ is the set
of booleans for the two outcomes success and failure, with operations defined
appropriately.
This particular case is detailed in section~\ref{sec-classic}.

\subsection{Basic properties} %[[[2

\begin{theorem}\label{thm-congruence}
  Observational equivalence is a congruence.
\end{theorem}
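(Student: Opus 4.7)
The plan is to show that $\simeq$ is preserved by each single-hole syntactic context, so that congruence follows by induction on context depth. Several cases rest on an auxiliary lemma: parallel composition is associative and commutative up to outcome equality, and analogously for $\para$ and for mixed associations. This is proved by exhibiting bijections between runs that permute the initial $1$ and $2$ prefixes on labels, with states preserved because $s$ is multiplicative and invariant under $\place$. With this, the case $[\cdot]|R'$ follows from $P \simeq Q$ in three rewrites, $\outcome{(P|R')|R} = \outcome{P|(R'|R)} = \outcome{Q|(R'|R)} = \outcome{(Q|R')|R}$, with tester $R'|R$. The cases $\place[\cdot]$ and $\new{x}[\cdot]$ are similar: $\outcome{\place P | R} = \outcome{P|R}$ comes from a bijection dropping the leading $1$ from labels originating in the left subterm, and $\outcome{\new{x}P | R} = \outcome{\new{x}(P|R)}$ holds by the bound-name convention ($x$ not free in $R$) via a bijection that pushes the restriction inside.

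The two delicate cases are action prefix and $\para$. For $\alpha.[\cdot]$, I would analyze runs of $\alpha.P | R$ by asking whether the unique synchronization label involving position $1$ (the one that consumes $\alpha$) belongs to the label-set of the run. By proposition~\ref{prop-permutation}, a run is identified with its set of labels; if a label $(1, 2.\lambda)$ is present, then because position $1$ is a prefix of every position $1.1.\iota$ used by subsequent internal actions of $P$, that label must causally precede all activity of $P$ in the sense of definition~\ref{def-causal-order}. This lets me match each such run bijectively with a pair consisting of a dual action at $2.\lambda$ in $R$ enabled at trigger time and a run of $\place P \mathbin{|} R''$, where $R''$ is the residual of $R$ after consuming its dual. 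Applying the identity $\outcome{\place P | R''} = \outcome{P | R''}$ from the $\place$ case, the hypothesis $P \simeq Q$ gives equality factor by factor, and the sum reassembles to $\outcome{\alpha.Q | R}$. The $\para$ case is analogous but involves more bookkeeping: in $(P \para R')|R$, transitions of $P$ sit at position prefix $1.1$ and those of $R'$ at $1.2$, so $P$-activity and $R'$-activity are mutually independent except for their competing synchronizations with $R$. A run decomposes via proposition~\ref{prop-permutation} into a partition of $R$'s actions between the two sides, after which $P \simeq Q$ applies to each resulting $P|R^*$ factor.

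The principal obstacle is the prefix case, specifically establishing that the classification is both exhaustive and injective. Maximality is delicate: a run that does not fire $\alpha$ is maximal only if no matching dual remains available in the final residual, so the ``$\alpha$ never fires'' summand is not simply $\outcome{R}$ but a sum restricted to runs of $R$ terminating in a state with no $\alpha$-dual; this restriction depends only on $R$ and so matches on both the $\alpha.P$ and $\alpha.Q$ sides. Reducing runs to label-sets via proposition~\ref{prop-permutation}, together with the causal-order analysis, is what keeps this bookkeeping tractable.
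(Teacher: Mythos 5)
Your overall strategy (reduce to preservation by each one-hole construct, use associativity bijections for $|$, a run decomposition for the prefix, and relabeling bijections for $\place$ and $\new{x}$) is the same as the paper's, and the prefix case is essentially right: like the paper, you split runs of $α.P|R$ by whether $α$ fires, and your observation that the ``never fires'' summand is a sum over maximal paths of $R$ blocked from supplying a dual --- depending only on $α$ and $R$ --- is a correct and necessary point. One refinement: the object you biject a firing run with is not a pair (dual action, residual run) but a triple $(ρ_1,R',ρ_2)$ where $ρ_1$ is the \emph{downward closure} of the synchronization label under $≤_ρ$, since $R$ may have to perform several transitions before its dual is enabled; and you must additionally argue that for fixed $(ρ_1,R')$ the suffixes $ρ_2$ range over \emph{all} of $\Runs(P|R')$, which is what licenses rewriting the inner sum as $\outcome{P|R'}$ before applying $P≃Q$. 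These are fixable bookkeeping points.

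The genuine gap is the $\para$ case. You propose that a run of $(P\para R')|R$ ``decomposes into a partition of $R$'s actions between the two sides, after which $P≃Q$ applies to each resulting $P|R^*$ factor,'' but no term $R^*$ is exhibited, and none exists by this route: the actions of $R$ that synchronize with $R'$ and those that synchronize with $P$ are causally entangled inside $R$ (one may guard the other), and maximality of a run is a property of the whole term, coupling the two sides --- a path cannot be declared maximal on the $P$ side without knowing whether $R'$ and $R$ could still interact. So fixing a partition does not turn the residual sum into $\outcome{P|R^*}$ for any $R^*$, and there is no associativity law mixing $\para$ and $|$ to fall back on. The paper resolves this by fixing the \emph{set} $A$ of $R'$--$R$ synchronizations, then re-encoding exactly those synchronizations on fresh private names $a_{ι,κ}$ guarded by witness gadgets $w_{ι,κ}.0\mid\bar w_{ι,κ}.1$ whose outcome collapses to $0$ any run that deviates from $A$; this converts $(P\para R')|R$ restricted to $A$ into a term of the form $(P|R'^A)|(R^A|W_A)$ where ordinary associativity of $|$ and the hypothesis $P≃Q$ apply, and then sums over all $A$. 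Without some construction of this kind your argument for $\para$ does not go through.
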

\begin{proof}
  Consider a pair of equivalent processes $P≃Q$.
  Let $α$ be an arbitrary action, we first prove that $α.P≃α.Q$.
  Let $R$ be an arbitrary process.
  The set $\Runs(α.P|R)$ can be split into two parts:
  the set $\?R_0$ of runs where the action $α$ is not triggered and
  the set $\?R_1$ of runs in which it is.
  Then for each run $ρ∈\?R_1$, there is a position $ι$ such that $(1,2.ι)∈ρ$.
  Let $ρ_1$ be the partial run $\set{a}{a∈ρ,a≤_ρ(1,2.ι)}$, that is the minimal
  run that triggers $α$; we have $(α.P|R)/ρ_1=\new{x}(P|R')$ for some $R'$;
  let $ρ_2=ρ∖ρ_1$, so that $ρ_2$ is a run of $P|R'$ and
  $(α.P|R)/ρ=\new{x}(P|R')/ρ_2$.
  Let $\?S$ be the set of triples $(ρ_1,R',ρ_2)$ for all $ρ∈\?R_1$.
  Obviously $\Runs(α.P|R)$ is in bijection with $\?R_0\uplus\?S$ and
  \[
    \outcome{α.P|R}
    = \sum_{2.ρ∈\?R_0} s(R/ρ)
    + \sum_{(ρ_1,R',ρ_2)∈\?S} s((P|R')/ρ_2)
  \]
  Now let $\?L=\set{(ρ_1,R')}{∃ρ_2,(ρ_1,R',ρ_2)∈\?S}$, and let $(ρ_1,R')∈\?L$.
  Since $\?R_1$ contains all runs of $α.P|R$ that trigger $α$, it contains all
  the runs of $P|R'$ since $P|R'$ can be reached from $α.P|R$, so we have
  $\set{ρ_2}{(ρ_1,R',ρ_2)∈\?S}=\Runs(P|R')$, hence
  \[
    \sum_{(ρ_1,R',ρ_2)∈\?S} s((P|R')/ρ_2)
    = \sum_{(ρ_1,R')∈\?L} \sum_{ρ_2∈\Runs(P|R')} s((P|R')/ρ)
    = \sum_{(ρ_1,R')∈\?L} \outcome{P|R'}
  \]
  By hypothesis, for all $R'$ we have $\outcome{P|R'}=\outcome{Q|R'}$ so
  \[
    \outcome{α.P|R}
    = \sum_{rρ∈\?R_0} s(R/ρ)
    + \sum_{(ρ_1,R')∈\?L} \outcome{Q|R'}
    = \outcome{α.Q|R}
  \]
  since the reasoning above equally applies to $Q$.
  Therefore we get $α.P≃α.Q$.

  \smallbreak

  For parallel composition, let $R$ and $S$ be arbitrary terms, we want to
  prove $\outcome{(P|R)|S}=\outcome{(Q|R)|S}$, in order to get $P|R≃Q|R$.
  Let $φ$ be the function over positions such that for all $ι$,
  $φ(1.1.ι)=1.ι$, $φ(1.2.ι)=2.1.ι$ and $φ(2.ι)=2.ι$,
  and for all path $p$, let $φ(p)$ be the path obtained by applying $φ$ on all
  positions in $p$.
  Then $φ$ is a bijection between the paths of $(P|R)|S$ and those of
  $P|(R|S)$, and it preserves homotopy so it actually provides a bijection
  between $\Runs((P|R)|S)$ and $\Runs(P|(R|S))$.
  Moreover, for all $ρ∈\Runs((P|R)|S)$, we have
  $s(((P|R)|S)/ρ)=s((P|(R|S))/φ(ρ))$, so $\outcome{(P|R)|S}=\outcome{P|(R|S)}$.
  Similarly we get $s((Q|R)|S)=\outcome{Q|(R|S)}$, and by hypothesis we have $P≃Q$ so
  $\outcome{P|(R|S)}=\outcome{Q|(R|S)}$, from which we conclude.

  \smallbreak

  For parallel composition without interaction, let $R$ and $S$ be arbitrary
  terms, we want to prove $\outcome{(P\para R)|S}=\outcome{(Q\para R)|S}$, in order to get
  $P\para R≃Q\para R$.
  The technique used for parallel composition with interaction does not apply
  here, because there is no simple form of associativity between the two
  parallel compositions.
  However, if the free names of $P$ and $R$ are disjoint, it is easily seen
  that $P|R$ and $P\para R$ are equivalent, and this is the fact we will use
  here.

  Let $A$ be a set of pairs $(ι,κ)$ where $ι$ is the position of an action in
  $R$ and $κ$ is the position of an action in $S$, such that these actions are
  on a free name and may interact with each other, and such that $A$ is a
  partial injection (each position of $R$ occurs at most once on the left,
  each position of $S$ occurs at most once on the right).
  Call this kind of set a synchronization and let $\?S$ be the set of all
  synchronizations.
  We say that a run $ρ∈\Runs((P\para R)|S)$ satisfies $A$, written $ρ⊩A$, if
  the interactions between $R$ and $S$ in $ρ$ are exactly those designated by
  $A$, that is if $\set{(ι,κ)}{(1.2.ι,2.κ)∈ρ}=A$.

  We will define $R^A$ and $S^A$ to be rewritings of $R$ and $S$ such that the
  pairs $(ι,κ)$ are guaranteed to interact.
  For this purpose, for each $(ι,κ)∈A$, let $a_{ι,κ}$ and $w_{ι,κ}$ be fresh
  names: $a_{ι,κ}$ is a new name on which the pair will interact (in order to
  avoid conflicts with other names) and $w_{ι,κ}$ will act as a witness of
  $(ι,κ)$, that will ensure that the pair actually interacts.
  Define $R^A$ as the term $R$ in which each action $a^ε(x).T$ at a
  position $ι$ such that there is an $(ι,κ)∈A$ is replaced by
  $a_{ι,κ}^{ε}(x).(T|w_{ι,κ}.1)$, and define $S^A$ as the term $S$ in
  which each action $a^ε(x).T$ at a position $κ$ such that there is an
  $(ι,κ)∈A$ is replaced by $a_{ι,κ}^{ε}(x).T$ (without $w_{ι,κ}$).
  Let $W_A$ be any parallel composition of $w_{ι,κ}.0|\bar{w}_{ι,κ}.1$ for all
  ${ι,κ}∈A$.

  If we now examine the runs of $(P|R^A)|(S^A|W_A)$, we observe that if a run
  $ρ$ does not trigger the actions of a given pair $(ι,κ)∈A$, then it must
  contain the reduction of $w_{1,j}.0|\bar{w}_{ι,κ}.1$ into $\place0|\place1$
  (because runs are made of maximal paths), so the outcome of this run is $0$.
  On the other hand, if all the interactions given by $A$ occur in $ρ$, it is
  still possible that some of the $w_{ι,κ}.0|\bar{w}_{ι,κ}.1$ reduce into
  $\place0|\place1$, but there is one possibility that each $\bar{w}_{ι,κ}.1$
  interacts with the $w_{ι,κ}.1$ in $R^A$.
  From a run that satisfies this condition, we can deduce a unique run of
  $(P\para R)|S$ that satisfies $A$, and reciprocally from a run of
  $(P\para R)|S$ that satisfies $A$ we can deduce a unique run of
  $(P|R^A)|(S^A|W_A)$ that does not reduce any $w_{ι,κ}.0$.
  Moreover, this bijection between runs preserves outcomes, so
  $\outcome{(P|R^A)|(S^A|W_A)}$ is the sum of all $s(((P\para R)|S)/ρ)$ for
  all runs $ρ$ that satisfy $A$.
  From this we get the following decomposition:
  \begin{multline*}
    \outcome{(P\para R)|S}
    = \sum_{A∈\?S} \sum_{ρ⊩A} s(((P\para R)|S)/ρ)
    = \sum_{A∈\?S} \outcome{(P|R^A)|(S^A|W_A)} \\
    = \sum_{A∈\?S} \outcome{P|(R^A|(S^A|W_A))}
    = \sum_{A∈\?S} \outcome{Q|(R^A|(S^A|W_A))}
    = \outcome{(Q\para R)|S}
  \end{multline*}
  The equality $\outcome{(P|R^A)|(S^A|W_A)}=\outcome{P|(R^A|(S^A|W_A))}$ is
  justified by the same argument as above for parallel composition, and the
  substitution of $Q$ for $P$ is the hypothesis $P≃Q$.
  The final equality is the same reasoning for $Q$ as for $P$ above.

  The equality $\outcome{\new{x}P|R}=\outcome{\new{x}Q|R}$ is justified by the
  fact that $\outcome{\new{x}P|R}$ and $\outcome{P|R}$ are equal if the name
  $x$ is fresh with respect to $R$.
\end{proof}

\begin{table}
  \begin{tabular}{llll}
    commutativity &
      $ P | Q  ≃  Q | P $ &
      $ P \para Q  ≃  Q \para P $ \\
    associativity &
      $ (P | Q) | R  ≃  P | (Q | R) $ &
      \rlap{$ (P \para Q) \para R  ≃  P \para (Q \para R) $} \\
    neutrality &
      $ P | 1  ≃  P $ &
      $ P \para 1  ≃  P $ \\[1ex]
    scope commutation &
      $ \new{x}\new{y} P  ≃  \new{y}\new{x} P $ \\
    scope extrusion &
      $ \new{x} (P | Q)  ≃  P | \new{x} Q $
      & with $x∉\fn(P)$ \\
    scope neutrality &
      $ \new{x} k  ≃  k $ \\[1ex]
    non-interaction &
      $ (P \para Q) | R  ≃  (P | R) \para Q $
      & with $\fn(Q)\cap\fn(R)=∅$ \\[1ex]
    place-holder &
      $ \place P ≃ P $ \\[1ex]
    inaction &
      $ \new{u} u^ε(x).P ≃ 1 $ \\
    non-interference &
      $ \new{u}( u(x).P | \bar{u}(x).Q ) ≃ \new{ux}(P|Q) $
  \end{tabular}
  \smallskip
  \caption{Basic equivalences.}
  \label{table-basic}
\end{table}
\begin{proposition}\label{prop-basic}
  The equivalences of table~\ref{table-basic} hold.
\end{proposition}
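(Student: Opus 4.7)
The plan is to prove each equivalence $\mathrm{LHS} ≃ \mathrm{RHS}$ by exhibiting, for every context $R$, a state-preserving bijection between the sets of runs of $\mathrm{LHS}|R$ and $\mathrm{RHS}|R$; the equality of outcomes then follows by summing termwise.

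For the purely structural equivalences—commutativity and associativity of both parallel compositions, neutrality of $1$, scope commutation, scope neutrality, and the place-holder law—the bijection is induced by a simple renaming $\phi$ of positions (swap $1 \leftrightarrow 2$ for commutativity, the reassociation already used in the parallel composition part of the proof of Theorem~\ref{thm-congruence} for associativity, strip a leading~$1$ for the place-holder and for neutrality of $1$, and the identity for scope commutation and scope neutrality). In each case one verifies by induction on transition derivations that $\phi$ transports valid transitions of one side to valid transitions of the other, that it preserves the independence relation on labels and hence lifts to a bijection on homotopy classes (using Proposition~\ref{prop-homotopy} to ensure validity is preserved under this lifting), and that it preserves maximality and the state function on reducts; all these checks are routine.

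Scope extrusion $\new{x}(P|Q) ≃ P|\new{x}Q$ with $x∉\fn(P)$ is of the same shape, the side condition being precisely what guarantees that no transition of $P$ interacts with the restriction and that no interaction with $Q$ involves $x$. The non-interaction equivalence $(P\para Q)|R ≃ (P|R)\para Q$ with $\fn(Q)\cap\fn(R)=∅$ also uses a position renaming, combined with the key observation that—since $Q$ and $R$ share no free names—no transition label of the form $(ι,κ)$ can mention both a position in $Q$ and a position in $R$; hence the valid paths on each side are in bijection and the independence structure is preserved. This is the same principle already used in the parallel composition part of the proof of Theorem~\ref{thm-congruence}.

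The two remaining equivalences need short semantic arguments. For inaction, $\new{u}u^ε(x).P$ has no transition whatsoever inside a context $|R$: by the side condition of the restriction rule the visible label on $u$ cannot escape $\new{u}$, and no internal synchronization on $u$ is possible since $u$ is bound and hence distinct from every name of $R$; therefore runs of $\new{u}u^ε(x).P|R$ are precisely the lifts through position~$2$ of runs of $R$, and since the state of $\new{u}u^ε(x).P$ is $1$ the outcomes agree with those of $1|R$. For non-interference, the same reasoning shows that the only initial transition of $\new{u}(u(x).P|\bar{u}(x).Q)$ is the forced synchronization of the two prefixes, producing $\new{u}\new{x}(\place P|\place Q)$; the conclusion then follows from the already-established place-holder equivalence together with the convention on bound names. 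The main obstacle is really the non-interaction case, where one must carefully check that no parasitic synchronization between $Q$ and $R$ can sneak into a run; the other cases reduce to routine bookkeeping on positions.
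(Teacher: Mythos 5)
Your proposal is correct and follows essentially the same strategy as the paper: position-renaming bijections on runs (preserving homotopy, maximality and the state function) for the structural laws, the disjoint-free-names observation for non-interaction, and the forced-synchronization/maximality argument reducing non-interference to the place-holder law. No substantive differences.
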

\begin{proof}
  For commutativity of composition with interaction, consider three terms
  $P,Q,R$. We first establish a bijection between $\Runs((P|Q)|R)$ and
  $\Runs((Q|P)|R)$.
  Let $φ$ be the function over positions that exchanges $ll$ and $1.2$ at the
  beginning of words, \ie\ $φ(1.1.ι)=1.2.ι$, $φ(1.2.ι)=1.1.ι$ and $φ(2.ι)=2.ι$ for all
  position $ι$.
  For all path $p$, define $φ(p)$ as the path obtained by applying $φ$ to all
  positions in $p$.
  Then clearly, for all path $p$ of $(P|Q)|R$, $φ(p)$ is a path of $(Q|P)|R$.
  Moreover, for all paths $p$ and $q$, $p\homo q$ if and only if
  $φ(p)\homoφ(q)$.
  Therefore $φ$ is a bijection between $\Runs((P|Q)|R)$ and $\Runs((Q|P)|R)$.
  Besides, since $\K$ is commutative, we have
  $\outcome{(P|Q)|R/p}=\outcome{(Q|P)|R/φ(p)}$
  for all $P,Q,R$ and $p$, so we have $\outcome{(P|Q)|R}=\outcome{(Q|P)|R}$.

  For associativity, we use the same technique with relabeling function
  defined as $φ(1.1.1.ι)=1.1.ι$, $φ(1.1.2.ι)=1.2.1.ι$, $φ(1.2.ι)=1.2.2.ι$ for
  all $ι$ and $φ(κ)=κ$ for all other positions $κ$; the final argument is
  associativity of the product in $\K$.
  For neutrality, we use $φ(1.1.ι)=φ(1.ι)$ for all $ι$ and $φ(κ)=κ$ for all
  other positions $κ$, and conclude by the neutrality of $1$ in $\K$.

  For parallel composition without interaction, the same arguments
  apply.
  For the non-interaction rule, a similar argument applies, using the fact
  that there can never be any interaction between $Q$ and $R$ if they do not
  share any free name (this property is specific of the πI calculus).

  For the scoping rules, we simply remark that $\outcome{\new{x}P|Q}=\outcome{P|Q}$ if $x$
  is a fresh name, since names have no influence on outcomes.

  For the place-holder rule, $\outcome{\place P|Q}=\outcome{P|Q}$ is proved by applying the
  function $φ$ such that $φ(1.1.ι)=1.ι$ for all $ι$ and $φ(κ)$ for all other
  $κ$.
  This establishes a bijection between $\Runs(\place P|Q)$ and $\Runs(P|Q)$
  since the $\place$ is never involved in any transition.

  For the inaction rule, remark that in a term $\new{u}u^ε(x).P|R$, there
  can be no transition involving $u^ε(x)$, hence all runs are made of
  labels of the form $(2.ι,2.κ)$, so the runs of $\new{u}u^ε(x).P|R$ are
  the runs of $R$ with an extra $2$ in front of each position, moreover the
  outcomes are the same since $s(\new{u}u^ε(x).P)=1$.

  For the non-interference rule, remark that all runs of
  $\new{u}(u(x).P|\bar{u}(x).Q)|R$ contain the transition $(1.1,1.2)$, because
  of maximality and the fact that $R$ cannot provide actions on $u$.
  The reduct by this transition is $\new{ux}(\place P|\place Q)|R$, and
  its runs are those of the original term without $(1.1,1.2)$, so it has the
  same outcome.
  We thus have
  $\new{u}(u(x).P|\bar{u}(x).Q)≃\new{ux}(\place P|\place Q)$,
  and the equivalence $\new{ux}(\place P|\place Q)≃\new{ux}(P|Q)$
  follows from the previous rules.
\end{proof}

The non-interaction rule is formulated as it is for generality.
Note that it implies the intuitive fact that the two compositions coincide for
terms with disjoint free names: if $\fn(P)\cap\fn(Q)=∅$ then
\[
  P|Q≃(1\para P)|Q≃(1|Q)\para P≃Q\para P≃P\para Q.
\]

Thanks to these properties, when considering processes up to observational
equivalence, we can consider the compositions to be associative and
commutative.
In this case we use the notation $\prod_{i∈I}P_i$ to denote the parallel
composition without interaction of the $P_i$ in any order (assuming only that
$I$ is finite).

\subsection{The space of processes} %[[[2

In order to study processes up to observational equivalence, we will now
describe some of the structure of the space of equivalence classes.
The first ingredient is to identify an additive structure that represents pure
non-determinism.

\begin{table}
  Commutative monoid for $⊕,0$:
  \begin{align*}
    P⊕Q &≃ Q⊕P &
    (P⊕Q)⊕R &≃ P⊕(Q⊕R) &
    P⊕0 &≃ P
  \end{align*}
  Action of the semi-ring $\K$:
  \begin{align*}
    1⋅P &≃ P &
    k_1k_2⋅P &≃ k_1⋅k_2⋅P \\\
    0⋅P &≃ 0 &
    (k_1+k_2)⋅P &≃ k_1⋅P ⊕ k_2⋅P &
    k⋅(P⊕Q) &≃ k⋅P ⊕ k⋅Q
  \end{align*}
  Bilinearity of compositions, linearity of hiding:
  \begin{align*}
    P|(Q⊕R) &≃ (P|Q)⊕(P|R) &
    P|(k⋅Q) &≃ k⋅(P|Q) \\
    P\para (Q⊕R) &≃ (P\para Q)⊕(P\para R) &
    P\para (k⋅Q) &≃ k⋅(P\para Q) \\
    \new{x}(P⊕Q) &≃ \new{x}P ⊕ \new{x}Q &
    \new{x}(k⋅P) &≃ k⋅\new{x}P
  \end{align*}
  \caption{$\K$-module laws over processes.}
  \label{table-module}
\end{table}

\begin{proposition}\label{prop-module}
  Let $\Procs_\K$ be the set of equivalence classes of processes over the
  semiring of outcomes $\K$.
  For all terms $P$ and $Q$ and all outcome $k$, define
  \begin{syntax}
    \define P⊕Q \case \new{u}((u.P|u.Q)|\bar{u}.1)
      \comment{where $u$ is a fresh name,}
    \define k⋅P \case k|P
  \end{syntax}
  Then $(\Procs_\K,⊕,0,⋅)$ is a $\K$-module,
  parallel compositions are bilinear operators and hiding is linear,
  i.e. the equivalences of table~\ref{table-module} hold.
\end{proposition}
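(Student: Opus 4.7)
The plan is to verify each equivalence in Table~\ref{table-module} in turn, using the congruence property of Theorem~\ref{thm-congruence}, the basic laws of Table~\ref{table-basic}, and where needed a direct run analysis via the formula $\outcome{P} = \sum_{\rho \in \Runs(P)} s(P/\rho)$. Two generic observations will do most of the work. First, in $\new{u}((u.P|u.Q)|\bar{u}.1)$ a maximal path must synchronise $\bar{u}.1$ with exactly one of the two $u$-prefixes, since the restriction on $u$ leaves no other possibility, and the unselected branch remains a dead action prefix whose state is $1$. Second, an outcome $k$ in active position behaves as a multiplicative constant that commutes freely across compositions thanks to associativity, commutativity and the scope rules.

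First I would handle the commutative monoid laws for $\oplus$. Commutativity $P \oplus Q \simeq Q \oplus P$ is immediate from the symmetry of the definition; neutrality $P \oplus 0 \simeq P$ holds because in any context $R$ the $u.P$-branch contributes $\outcome{P|R}$ while the $u.0$-branch contributes the scalar $0$, the revealed constant $0$ entering the state product; associativity follows from the observation that in both $(P \oplus Q) \oplus R$ and $P \oplus (Q \oplus R)$ the runs split as a ternary choice between $P$, $Q$ and $R$ followed by a run of the selected subterm, the two unselected subterms remaining as action prefixes of state $1$. Next I would address the $\K$-action: the laws $1 \cdot P \simeq P$, $k_1 k_2 \cdot P \simeq k_1 \cdot (k_2 \cdot P)$ and $0 \cdot P \simeq 0$ reduce to Table~\ref{table-basic} together with the semiring axioms applied to the constant factor in the state, and the distributivity laws $(k_1 + k_2) \cdot P \simeq k_1 \cdot P \oplus k_2 \cdot P$ and $k \cdot (P \oplus Q) \simeq k \cdot P \oplus k \cdot Q$ follow from the same branch decomposition combined with distributivity in $\K$.

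Finally I would turn to the bilinearity of the compositions and the linearity of hiding. The scalar cases $P | (k \cdot Q) \simeq k \cdot (P|Q)$, $P \para (k \cdot Q) \simeq k \cdot (P \para Q)$ and $\new{x}(k \cdot P) \simeq k \cdot \new{x}P$ are direct applications of the associativity, commutativity and scope rules to the constant $k$. The additive cases $P | (Q \oplus R) \simeq (P|Q) \oplus (P|R)$ and its $\para$-analogue are the heart of the matter: after using scope extrusion to bring the auxiliary $u$ outside on the left, one establishes a bijection between runs in every context that sends the $Q$-branch on the left to the $(P|Q)$-branch on the right and similarly for $R$, with the outcome preserved because the duplicated copy of $P$ on the dead branch of the right-hand side is still under an action prefix of state $1$; the linearity $\new{x}(P \oplus Q) \simeq \new{x}P \oplus \new{x}Q$ then follows from scope commutation using the freshness of $u$. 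The main obstacle is precisely this additive bilinearity: it is the only equivalence where the two sides differ in the number of syntactic copies of a subterm, so the argument cannot be reduced to a purely syntactic rewriting and must rely on a careful bijection of runs together with the fact that the losing branches of the $\oplus$-encoding contribute $1$ to the state.
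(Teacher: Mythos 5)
Your proposal is correct in substance and, for most of the proposition, follows the same route as the paper: the run-splitting observation that $\bar{u}.1$ must synchronise with exactly one of the two guarded branches gives $\outcome{(P⊕Q)|R}=\outcome{P|R}+\outcome{Q|R}$, the observation that a constant in active position is a multiplicative factor gives $\outcome{(k|P)|R}=k\,\outcome{P|R}$, and the monoid and $\K$-action laws then follow from the semiring axioms exactly as you describe.

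Where you genuinely diverge is on the additive bilinearity $P|(Q⊕R)≃(P|Q)⊕(P|R)$, which you single out as the main obstacle and propose to prove by a direct bijection of runs between the two sides. The paper instead gets it essentially for free from the two facts above: it writes $\outcome{(P|(Q⊕R))|S}=\outcome{(Q⊕R)|(P|S)}=\outcome{Q|(P|S)}+\outcome{R|(P|S)}=\outcome{(P|Q)|S}+\outcome{(P|R)|S}=\outcome{((P|Q)⊕(P|R))|S}$, using only the already-proved commutativity and associativity of $|$ to move $P$ into the testing context and then the additivity lemma instantiated with context $P|S$. Your direct bijection can be made to work, but it is heavier than you suggest: on the right-hand side the two copies of $P$ sit under the $u$-prefix and can act only after the choice transition, whereas on the left $P$ may interleave with $S$ before it, so the correspondence is not a mere relabelling of positions --- one must first use homotopy to schedule the choice transition at the head of each maximal path and then match the residuals via the relabelling bijections of Proposition~\ref{prop-basic}. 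The paper's reduction packages all of that into the single additivity lemma, which is why the duplicated-subterm issue you worry about never arises there; the algebraic route buys a three-line proof where the combinatorial one needs another round of run analysis.
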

\begin{proof}
  We first show that, for all terms $P$, $Q$ and $R$,
  $\outcome{(P⊕Q)|R}=\outcome{P|R}+\outcome{Q|R}$.
  Consider $\Runs((P⊕Q)|R)=\Runs(\new{u}((u.P|u.Q)|\bar{u}.1)|R)$.
  It is clear that any run contains an interaction of $\bar{u}.1$ with either
  $u.P$ or $u.Q$, since none of these may interact with anything else.
  We can thus write $\Runs((P⊕Q)|R)=\?R_1\uplus\?R_2$ where $\?R_1$ is the set
  of runs that contain $(1.1.1,1.2)$ and $\?R_2$ is the set of runs that contain
  $(1.1.2,1.2)$.
  The runs in $\?R_1$ are the runs of $\new{u}((u.P|1)|\bar{u})|R$ and each of
  these runs has the same outcome in both terms, so
  \[
    \sum_{ρ∈\?R_1} s\bigl(((P⊕Q)|R)/ρ\bigr)
    = \outcome{\new{u}((u.P|1)|\bar{u})|R}
    = \outcome{P|R}
  \]
  by the equivalences of table~\ref{table-basic}.
  By a similar argument, we get %$\sum_{ρ∈\?R_2}s(((P⊕Q)|R)/ρ)=\outcome{Q|R}$,
  the same for $\?R_2$ and $\outcome{Q|R}$
  so we finally get $\outcome{(P⊕Q)|R}=\outcome{P|R}+\outcome{Q|R}$.

  This equality and the fact that $(\K,+,0)$ is a commutative monoid
  implies that $(\Procs_\K,⊕,0)$ is a commutative monoid (where
  $0$ is the atomic term with outcome $0$).

  For any terms $P$ and $Q$ and any outcome $k$, it is clear that
  $\outcome{(k|P)|Q}=k\outcome{P|Q}$, since the term $k$ has no transition and
  contributes $k$ multiplicatively to all outcomes of the term.
  This directly implies that the operation $k⋅P$ has all required properties.

  For the bilinearity of compositions, consider arbitrary terms $P,Q,R,S$.
  By previous results, we have
  \begin{multline*}
    \outcome{(P|(Q⊕R))|S}
    = \outcome{(Q⊕R)|(P|S)}
    = \outcome{Q|(P|S)} + \outcome{R|(P|S)} \\
    = \outcome{(P|Q)|S} + \outcome{(P|R)|S}
    = \outcome{((P|Q)⊕(P|R))|S}
  \end{multline*}
  This proves that parallel composition distributes over $⊕$, and the fact
  that $0$ is absorbing is equivalent to the rule $0⋅P≃0$.
  The same rules for parallel composition without interaction can be proved by
  similar arguments about the partition of $\Runs(P\para(Q⊕R))$ into runs that
  choose $Q$ and runs that choose~$R$.

  For hiding, consider arbitrary terms $P,Q,R$ and let $x$ be a name.
  Assume (without loss of generality) that $x$ does not occur in $R$.
  Then we have
  \[
    \outcome{\new{x}(P⊕Q)|S}
    = \outcome{(P⊕Q)|S}
    = \outcome{P|S} + \outcome{Q|S}
    = \outcome{\new{x}P|S} + \outcome{\new{x}Q|S}
    = \outcome{(\new{x}P⊕\new{x}Q)|S}
  \]
  The equivalence $\new{x}0≃0$ is one of the rules of table~\ref{table-basic}.
\end{proof}

Remark that all syntactic constructions induce linear constructions
on equivalence classes, except for the action prefix, which is not linear but
actually affine.
Indeed, for an action $α$, the term $α.0$ is not equivalent to $0$: it will be
neutral in executions that do not trigger $α$, and multiply the outcome by
$0$ (thus annihilating it) in runs that do.
It can be understood as a statement “I could have performed $α$ but I will not
do it” so that any run that contradicts this statement has outcome $0$.
The purely linear part of actions is the opposite: the linear action
$\lin{α}.P$ will act as $α.P$ if its environment actually triggers the action,
but will turn to $0$ if it is never activated.

\begin{definition}\label{def-linear-action}
  For all action $α$ and term $P$, the linear action of $α$ on $P$ is
  \begin{syntax}
    \define \lin{α}.P
    \case \new{w}(α.(P|w.1)|(w.0|\bar{w}.1))
    \comment{where $w$ is a fresh name.}
  \end{syntax}
  An interaction is said to trigger the linear action if it triggers the
  action $w.1$.
  Terms of the form $α.0$ are called an inactions.
\end{definition}

This definition has the expected behaviour because of the maximality of runs.
If $\lin{α}.P$ is in active position, then any run that does not trigger $α$
must instead trigger $w.0$, hence any such run has outcome $0$.
A run in which the term $\lin{α}.P$ does not produce $0$ must activate $α$, so
that $w.1$ acts instead of $w.0$.

\begin{table}
  Linearity:
  \begin{align*}
    \lin{α}.(P⊕Q) &≃ \lin{α}.P⊕\lin{α}.Q &
    \lin{α}.(k⋅P) &≃ k⋅\lin{α}.P &
    \new{u}\lin{u}^ε(x).P &≃ 0
  \end{align*}
  Composition of inactions
  (the subject of $β$ is not bound by $α$):
  \begin{align*}
    \lin{α}.(β.0 | P) &≃ β.0 | \lin{α}.P &
    α.0|α.0 &≃ α.0 &
    α.0|\bar{α}.0 &≃ 0 \\
    \lin{α}.(β.0 \para P) &≃ β.0 \para \lin{α}.P &
    α.0\paraα.0 &≃ α.0
  \end{align*}
  \caption{Laws of linear actions and inactions.}
  \label{table-linear-actions}
\end{table}

\begin{proposition}\label{prop-affine-action}
  For all $α$ and $P$, $α.P≃\lin{α}.P⊕α.0$.
  The function $P\mapsto\lin{α}.P$ is linear and the equivalences of
  table~\ref{table-linear-actions} hold.
\end{proposition}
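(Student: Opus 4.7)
The plan is to establish the fundamental decomposition $α.P ≃ \lin{α}.P ⊕ α.0$ by a direct outcome computation, then derive linearity and the equations of table~\ref{table-linear-actions} from it together with the prior results. The heart of the argument is a formula analogous to the one used in the proof of theorem~\ref{thm-congruence}: for $α.X|R$, the runs of $α.X|R$ decompose into the set $\?R_0$ of runs not triggering $α$ (contributing $\sum_{2.ρ∈\?R_0} s(R/ρ)$) and the set $\?L$ of pairs $(ρ_1,R')$ indexing those that do (contributing $\sum_{(ρ_1,R')∈\?L} \outcome{X|R'}$). I would first establish the analogue $\outcome{\lin{α}.P|R} = \sum_{(ρ_1,R')∈\?L} \outcome{P|R'}$ by unfolding $\lin{α}.P = \new{w}(α.(P|w.1)|(w.0|\bar{w}.1))$ and analysing the destiny of $\bar{w}.1$: since $w$ is fresh, its only possible partners are $w.0$ (always accessible) and $w.1$ (accessible only after $α$ has fired). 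If $α$ never fires, $\bar{w}.1$ is forced to match $w.0$, injecting a factor $s(0)=0$; if $α$ fires but $\bar{w}.1$ then matches $w.0$, the same $0$ factor appears; only runs in which $α$ fires and $\bar{w}.1$ subsequently matches $w.1$ produce a non-zero outcome. Since the $w$-transitions occur at positions independent of every transition in $P$ and $R$, proposition~\ref{prop-homotopy} ensures that each such surviving run is a single homotopy class extending exactly one $ρ_2∈\Runs(P|R')$, with outcome $s((P|R')/ρ_2)$.

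Comparing next with $\outcome{α.0|R}$: runs triggering $α$ contribute $0$ via the $\place 0$ factor, while runs not triggering $α$ biject with $\?R_0$ (they depend only on $R$) and contribute $\sum_{2.ρ∈\?R_0} s(R/ρ)$. Combined with the sum formula from proposition~\ref{prop-module}, this yields $\outcome{(\lin{α}.P⊕α.0)|R} = \outcome{\lin{α}.P|R} + \outcome{α.0|R} = \outcome{α.P|R}$, establishing the first equivalence. The linearity identities $\lin{α}.(P⊕Q) ≃ \lin{α}.P ⊕ \lin{α}.Q$ and $\lin{α}.(k⋅P) ≃ k⋅\lin{α}.P$ then follow directly from the decomposition formula together with bilinearity of parallel composition, while $\new{u}\lin{u}^ε(x).P ≃ 0$ holds because the set $\?L$ is empty in every context $R$: the hidden name $u$ admits no partner.

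For the remaining equations of table~\ref{table-linear-actions}, each reduces to an outcome computation of the same flavour. The equations $\lin{α}.(β.0|P) ≃ β.0|\lin{α}.P$ and its $\para$-analogue follow from the decomposition formula, using that the condition on the subject of $β$ ensures the inaction $β.0$ contributes identically on both sides irrespective of whether $α$ fires; $α.0|α.0 ≃ α.0$ and $α.0\paraα.0 ≃ α.0$ hold because any run firing one of the inactions carries a factor~$0$, so only runs firing none of them survive, and these biject between the two sides; $α.0|\bar{α}.0 ≃ 0$ holds because maximality forces at least one of $α$, $\bar{α}$ to fire (either with each other or with a partner in the context), each such firing inserting a factor~$0$. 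The main obstacle is the first step: carefully checking that the $w$-guard of $\lin{α}.P$ partitions runs cleanly and that the $w$-related transitions are truly position-independent from everything else, so that the surviving homotopy classes biject with those of $P|R'$ without double-counting.
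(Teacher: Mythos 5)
Your proposal is correct and follows essentially the same route as the paper: split the runs of $α.P|Q$ by whether $α$ fires, show via the $w$-witness and maximality that the non-triggering runs of $\lin{α}.P|Q$ all carry a factor $0$ while the triggering ones biject (preserving outcomes) with the triggering runs of $α.P|Q$, match the non-triggering runs with $α.0|Q$, and then derive linearity and the remaining equations of table~\ref{table-linear-actions} by the same run-counting arguments. The only minor deviation is $\new{u}\lin{u}^ε(x).P≃0$, which you argue directly (no partner for $u$, so every run triggers $w.0$) where the paper derives it equationally from table~\ref{table-basic}; both are sound.
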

\begin{proof}
  We first prove $α.P≃\lin{α}.P⊕α.0$.
  Consider an arbitrary term $Q$ call $\?R_1$ the set of runs of $α.P|Q$ that
  contain a transition $(1,ι)$, i.e. runs that trigger $α$, and let $\?R_0$ be
  the set of runs that do not.
  The runs of $\?R_0$ are also runs of $α.0|Q$, moreover for each run
  $ρ∈\Runs(α.0|Q)∖\?R_0$ the action $α$ is triggered so $0$ contributes to the
  outcome and $s((α.0|Q)/ρ)=0$, hence
  $\sum_{ρ∈\?R_0}s((α.P|Q)/ρ)=\outcome{α.0|Q}$. Now consider a run $ρ∈\?R_1$.
  By definition, there is a $κ$ such that $(1,κ)∈ρ$.
  We deduce from $ρ$ a run $ρ'$ of $\lin{α}.P$ as follows:
  let $φ$ be the function that maps each $1.1.ι$ to $1.1.1.1.ι$ and all other
  positions to themselves; set $ρ':=φ(ρ∖(1,κ))∪\implem{(1.1,κ),(1.1.1.2,1.2.2)}$.
  This literally means that $ρ'$ is $ρ$ where all positions in $P$ are
  shifted to reflect their positions in
  $\new{w}(α.(P|w.1)|(w.0|\bar{w}.0))|Q$, $(1,κ)$ is shifted to reflect the
  new position of $α$, and the interaction between $w.1$ and $\bar{w}.1$ is
  added (which is valid since it is freed when $α$ is triggered).
  Clearly $ρ'$ is a run of $\lin{α}.P|Q$ and
  $s((\lin{α}.P|Q)/ρ')=s((α.P|Q)/ρ)$.
  The mapping $ρ\mapstoρ'$ is objective, and its image is the set of runs of
  $\lin{α}.P|Q$ that trigger $w.1$.
  By maximality, any other run of $\lin{α}.P|Q$ must trigger $w.0$, hence the
  outcome of all other runs is $0$, which implies
  $\outcome{\lin{α}.P|Q}=\sum_{ρ∈\?R_1}s((α.P|Q)/ρ)$.
  We can finally deduce $\outcome{α.P|Q}=\outcome{α.0|Q}+\outcome{\lin{α}.P|Q}$ and conclude.

  For linearity, we use the fact that $\outcome{\lin{α}.P|Q}$ is the sum of the
  $s((\lin{α}.P|Q)/ρ)$ for the runs $ρ$ that actually trigger $α$ (and the
  witness action $w.1$).
  If $P=k|P'$, these runs are the same in $\lin{α}.(k|P')|Q$ and
  $\lin{α}.(1|P')|Q$, but the outcomes are multiplied by $k$ in the first
  case, so $\outcome{\lin{α}.(k|P')|Q}=k⋅\outcome{\lin{α}.(1|P')|Q}$ and
  $\lin{α}.(k|P')≃k|\lin{α}.(1|P')≃k|\lin{α}.P'$.
  If $P=P_1⊕P_2$, the choice is eventually active in all relevant runs, so
  each of these runs triggers either $P_1$ or $P_2$.
  We can thus establish a bijection between $\Runs(\lin{α}.(P_1⊕P_2)|Q)$ and
  the disjoint union of $\Runs(\lin{α}.P_1|Q)$ and $\Runs(\lin{α}.P_2|Q)$.
  Since outcomes are preserved by this bijection, we finally get
  $\outcome{\lin{α}.(P_1⊕P_2)|Q}=\outcome{\lin{α}.P_1|Q}+\outcome{\lin{α}.P_2|Q}$
  and $(P_1⊕P_2)|Q≃(P_1|Q)⊕(P_2|Q)$.

  The equivalence $\new{u}u^ε(x).P≃0$ can be deduced from previous equations:
  \begin{align*}
    \new{u}u^ε(x).P
    &= \new{uw}(u^ε(x).(P|w.1) | (w.0 | \bar{w}.1)) \\
    &≃ \new{w}(\new{u}u^ε(x).(P|w.1) | (w.0 | \bar{w}.1)) \\
    &≃ \new{w}(1 | (w.0 | \bar{w}.1))
    ≃ \new{w}(w.0 | \bar{w}.1)
    ≃ \new{w}(0 | 1)
    ≃ 0
  \end{align*}

  For the equivalence $\lin{α}.(β.0|P)≃β.0|\lin{α}.P$,
  assuming the subject of $β$ is not the bound name of action $α$,
  let $Q$ be an arbitrary term and consider $\Runs(\lin{α}.(β.0|P)|Q)$.
  Any run that does not trigger $\lin{α}$ or that triggers both $\lin{α}$ and
  $β$ has outcome $0$, so the only relevant runs are those that trigger
  $\lin{α}$ but not $β$.
  Clearly these runs are in bijection with the runs of $(β.0|\lin{α}.P)|Q$
  that trigger $\lin{α}$ and not $β$, by a simple rewriting of the positions.
  Moreover, this bijection preserves outcomes, so the sums of the outcomes of
  these runs are the same.
  A similar argument proves $\lin{α}.(β.0\para P)≃β.0\para\lin{α}.P$.

  For the composition of inactions, the relevant runs of a term $(α.0|α.0)|P$
  or $(α.0\para α.0)|P$ are those that do not trigger any occurrence of $α$,
  so the number of such occurrences does not matter.
  Finally, we get $α.O|\bar{α}.0≃0$ by the remark that all runs of
  $(α.0|\bar{α}.0)|P$ must trigger one of the inactions: either $α.0$
  interacts with $P$, or $\bar{α}.0$ interacts with $P$, or none of these
  happen and $α.0$ and $\bar{α}.0$ must interact together, by maximality of
  runs.
\end{proof}

\begin{definition}\label{def-simple-term}
  A term is \emph{simple} if it is generated by the grammar
  \[
    P,Q := 1 ,\; α.0 ,\; \lin{α}.P ,\; (P|Q) ,\; (P\para Q) ,\; \new{x}P
  \]
  An pre-trace $ρ∈\Pretraces(P)$ is exhaustive if it triggers all linear
  actions and no inaction, and no sub-term of $P/ρ$ has the form $Q|R$ with
  $Q$ containing some $α.0$ and $R$ containing $\bar{α}.0$.
  The set of such pre-traces is written $\Pretraces_e(P)$.
\end{definition}

% The condition that a simple term must not contain opposite inactions joined by
% composition with interaction is justified by the fact that such a term would
% be equivalent to $0$, by a generalization of the equation $α.0|\bar{α}.0≃0$.

Simple terms have the property that the outcome of any run is either $1$ or
$0$.
More precisely, it is easy to see that the outcome of a run is $1$ if and only
if it triggers all linear actions and no inaction.
The notion of exhaustive pre-trace is the correct extension of this notion to
pre-traces, indeed every run of a simple term $P|Q$ with outcome $1$ is made
of an exhaustive pre-trace of $P$ and an exhaustive pre-trace of $Q$.
The condition on $P/ρ$ simply rules out interactions of $P$ that
lead to a term $P'$ where there are dual inactions that may interact, since
that would imply $P'≃0$, as a generalization of the equation
$α.0|\bar{α}.0≃0$.

Remark that, by the decomposition of proposition~\ref{prop-affine-action} and
the linearity of all constructions of simple terms, we immediately prove that
every term is equivalent to a linear combination of simple terms.
As a consequence, two terms $P$ and $Q$ are equivalent if and only if for all
\emph{simple} term $R$, $\outcome{P|R}=\outcome{Q|R}$.

\section{A linear algebraic semantics} %[[[1
\label{sec-algebraic}

The equivalence of finite processes is defined by the fact that they give the
same outcome when tested against the same finite processes.
The equivalence class of a term $P$ is thus completely defined by the function
$Q\mapsto\outcome{P|Q}$, which can be considered as a function from
equivalence classes to outcomes.
Moreover, by the properties of the space of processes, we know that this
function is linear.

\begin{definition}
  A behaviour is a linear form over $\Procs_\K$.
  A partial behaviour is a linear form defined over a submodule of $\Procs_\K$.
  The behaviour of a term $P$ is the form $⟦P⟧$ such that, for all
  $Q∈\Procs_\K$, $⟦P⟧(Q)=\outcome{P|Q}$.
\end{definition}

Switching from a space to its dual makes the space of considered objects grow,
as we will see below.
In our context, it allows us to move from inductive objects (finite processes)
to coinductive objects (intuitively, this includes infinite terms).
This technique is in some sense analogous to the basic idea of the theory of
distributions: consider a generalized function as a linear form over simple
well-behaved objects (smooth test functions, as analogous of our finite
terms).

We now describe a way of giving semantics to infinitary processes, showing
that recursive process definitions have solutions as partial behaviours.

\begin{definition}
  We assume we have a set $\Indets$ of process indeterminates of the form
  $X⟨x_1…x_n⟩$, which represent an unknown term with free names $x_1…x_n$.
  The set of partial terms is generated by the same grammar as finite
  processes (as of definition~\ref{def-terms}), augmented with indeterminates.
  The set of indeterminates of a partial term $P$ is written $\ind(P)$.

  The refinement preorder is the relation $⊑$ over partial terms such that
  $P⊑Q$ if $Q$ is obtained from $P$ by substituting each indeterminate by an
  arbitrary term with the same free names.
  The relation $⊑_f$ is its restriction to the case when the right-hand side
  is a finite term.
\end{definition}

\begin{definition}
  Let $P$ be a partial term.
  If there is a $k∈\K$ such that $\outcome{Q}=k$ for all finite $Q$ with
  $P⊑_fQ$, then we set $\outcome{P}=k$, otherwise $\outcome{P}$ is undefined.
  The interpretation $⟦P⟧$ is the partial function $Q\mapsto\outcome{P|Q}$
  from $\Procs_\K$ to $\K$.
  Two partial terms are equivalent if the have the same interpretation.
\end{definition}

This is clearly an extension of the semantics of total terms, since the set of
refinements of a total term $P$ is $\implem{P}$.
One easily checks that this definition of the interpretation of partial
terms enjoys the same properties as finite terms:
\begin{proposition}
  For all partial term $P$, $⟦P⟧$ is a partial behaviour.
  The equations of tables \ref{table-basic} and \ref{table-module} hold for
  partial terms.
  Interpretations are preserved by injective renaming of indeterminates.
  Equivalence is preserved by prefixing, hiding and
  composition with partial terms with distinct indeterminates.
\end{proposition}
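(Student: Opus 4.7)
The plan is to lift each claim from the finite case by exploiting the defining identity ``$\outcome{P}=k$ iff every finite refinement of~$P$ gives outcome~$k$.'' The pivotal observation is that refinements commute with syntactic constructions: the set of finite refinements of $\alpha.P$ is $\{\alpha.P':P⊑_f P'\}$, that of $\new{x}P$ is $\{\new{x}P':P⊑_fP'\}$, and that of $P|R$, when $P$ and $R$ have disjoint indeterminates, is $\{P'|R':P⊑_fP',R⊑_fR'\}$. The tests $S$ against which we measure interpretations are already finite, so substitutions only touch the partial factors.

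For the partial-behaviour claim, I first check that the domain of $⟦P⟧$ is a submodule and that $⟦P⟧$ is linear on it. If $R_1,R_2$ lie in the domain, then for every $P⊑_fP'$ we have $\outcome{P'|R_j}=⟦P⟧(R_j)$; by the finite bilinearity of composition (proposition~\ref{prop-module}), $\outcome{P'|(R_1⊕R_2)}=⟦P⟧(R_1)+⟦P⟧(R_2)$, a value independent of~$P'$, hence $⟦P⟧(R_1⊕R_2)$ is defined and equals the sum; scalar multiplication is identical. For each equation $P≃Q$ in tables~\ref{table-basic} and~\ref{table-module} I note that substituting each indeterminate of $P$ with a given finite term, and performing the corresponding substitution in $Q$, yields finite terms $P^\sigma$ and $Q^\sigma$ satisfying $P^\sigma≃Q^\sigma$ by the finite equation; testing against any $S$ thus gives $\outcome{P^\sigma|S}=\outcome{Q^\sigma|S}$ for every~$\sigma$, so the sets $\{\outcome{P^\sigma|S}\}$ and $\{\outcome{Q^\sigma|S}\}$ coincide and $⟦P⟧,⟦Q⟧$ have the same defined points with the same values. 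Injective renaming of indeterminates is immediate since it is a bijection on refinements that preserves outcomes.

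The congruence claims are the delicate part. For a prefix $\alpha.P$, the decomposition of $\outcome{\alpha.P'|S}$ established inside the proof of theorem~\ref{thm-congruence} expresses it as $\sum_{2.\rho\in\?R_0}s(S/\rho)+\sum_{(\rho_1,S')\in\?L}\outcome{P'|S'}$, where $\?R_0$ and $\?L$ depend only on $\alpha$ and $S$; since each $\outcome{P'|S'}$ is either independent of~$P'$ (when $S'$ lies in $\mathrm{dom}(⟦P⟧)$) or undefined, the definedness and value of $⟦\alpha.P⟧(S)$ depend on $P$ only through $⟦P⟧$, giving $⟦P⟧=⟦Q⟧\Rightarrow⟦\alpha.P⟧=⟦\alpha.Q⟧$. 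Hiding is analogous, using scope extrusion against a fresh test name to reduce $\outcome{\new{x}P|S}$ to $\outcome{P|S}$. For composition with a partial term $R$ whose indeterminates are disjoint from those of $P$ and $Q$, refinements of $P|R$ factor as $P'|R'$ with $P'$ and $R'$ varying independently; by finite associativity $\outcome{(P'|R')|S}=\outcome{P'|(R'|S)}$, which for fixed $R'$ is independent of $P'$ precisely when $R'|S\in\mathrm{dom}(⟦P⟧)$, and then equals $⟦P⟧(R'|S)$. Hence $⟦P|R⟧(S)$ is a function of $⟦P⟧$, $R$ and $S$ alone, and $⟦P⟧=⟦Q⟧$ transfers to $⟦P|R⟧=⟦Q|R⟧$; the $\para$ case goes through in the same way via the bilinearity established in proposition~\ref{prop-module}. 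The main obstacle I anticipate is exactly this composition step: one has to be rigorous about the disjoint-indeterminates hypothesis to justify the independent factoring of refinements, since without it a single shared indeterminate would link the refinements of $P|R$ in a way that would prevent re-indexing by $⟦P⟧$ alone.
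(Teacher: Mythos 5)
Your proposal follows the same route as the paper's own proof: every claim is lifted from the finite case through the characterization of $\outcome{P}$ as the common value over all finite refinements; the equations of tables~\ref{table-basic} and~\ref{table-module} are transferred by the outcome-preserving correspondence between refinement sets; injective renaming is the same bijection observation; and congruence under composition is obtained exactly as in the paper, by factoring the refinements of $P|R$ into independent choices of refinements of $P$ and of $R$ and reading $\outcome{(P'|R')|S}$ as $⟦P⟧(R'|S)$ — your closing worry about the disjoint-indeterminates hypothesis is precisely the point the paper makes. For everything the paper actually argues in detail, your argument is the paper's argument.

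The one place you go beyond the paper is prefixing, which the paper dismisses as immediate, and there your justification has a hole. From $\outcome{α.P'|S}=C+\sum_{(ρ_1,S')∈\?L}\outcome{P'|S'}$ you conclude that the definedness and value of $⟦α.P⟧(S)$ depend on $P$ only through $⟦P⟧$, ``since each $\outcome{P'|S'}$ is either independent of $P'$ or undefined''. First, $\outcome{P'|S'}$ is always defined for a finite $P'$; what may fail is its independence of $P'$. More importantly, definedness of $⟦α.P⟧(S)$ only says that the \emph{sum} over $\?L$ is constant in $P'$, and in a general semiring a sum of non-constant functions can be constant (already over $\Nat$: $f_1+f_2$ constant with both $f_i$ varying). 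So $⟦α.P⟧(S)$ can in principle be defined without each $S'$ lying in the domain of $⟦P⟧$, and then nothing lets you replace $P'$ by a refinement $Q'$ of an equivalent $Q$. The composition case does not suffer from this, because there the observable attached to a fixed pair $(R',S)$ is the single value $\outcome{P'|(R'|S)}$ rather than a sum, and constancy over a product of refinement sets really is constancy in each variable separately. Repairing the prefix case would require either showing that every $S'∈\?L$ must lie in the domain of $⟦P⟧$ whenever $⟦α.P⟧(S)$ is defined, or ruling out the accidental-cancellation scenario; the paper offers neither, so this is a gap you share with it rather than one you introduced, but as written the step does not go through.
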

\begin{proof}
  First, remark that for all $P⊑_fP'$, $\outcome{P'|0}=0$, hence
  $\outcome{P|0}$ so $⟦P⟧$ is defined on $0$.
  Now assume $⟦P⟧$ is defined on $Q$ and $R$, for all $P⊑_fP'$ we have
  $\outcome{P'|(Q⊕R)}=\outcome{P'|Q}+\outcome{P'|R}=⟦P⟧(Q)+⟦P⟧(R)$ so $⟦P⟧$ is
  correctly defined on $Q⊕R$.
  The same argument applies for $⟦P⟧(k⋅Q)=k⋅⟦P⟧(Q)$.
  As a consequence, $⟦P⟧$ is indeed a partial behaviour.

  Let $P,Q,R$ be partial terms.
  The relation
  \[
    \set{((P'|Q')|R',(Q'|P')|R')}{P⊑_fP',Q⊑_fQ',R⊑_fR'}
  \]
  is obviously a bijection between total refinements of $(P|Q)|R$ and total
  refinements of $(Q|P)|R$, and this bijection preserves outcomes because of
  the equation $P|Q≃Q|P$ for total terms.
  As a consequence we have $\outcome{(P|Q)|R}=\outcome{(Q|P)|R}$ for all $R$,
  which implies $⟦P|Q⟧=⟦Q|P⟧$.

  The same argument applies for all other equations.
  For scope extrusion and non-interaction, we use the fact that indeterminates
  have a fixed set of free names.
  For the equations in which a sub-term is duplicated (distribution of
  compositions over $⊕$), we use the fact that all occurrences of a given
  indeterminate are replaced by the same term when refining.

  If $P$ is a partial term and $Q$ is a refinement of $P$ obtained by
  injectively renaming the indeterminates of $P$, then for all total $R$ we
  clearly have $P⊑R$ if and only if $Q⊑R$, so $P≃Q$.

  Let $P,P',Q$ be partial terms such that $P≃P'$ and
  $\ind(P)\cap\ind(Q)=\ind(P')\cap\ind(Q)=∅$.
  Then the refinements of $P|Q$ are the compositions of a refinement of $P$
  and a refinement of $Q$, chosen independently since $\ind(P)\cap\ind(Q)=∅$. 
  Let $R$ be an arbitrary total term.
  If $\outcome{(P|Q)|R}$ is defined and has value $k$, then for all refinements
  $P⊑_fP''$ and $Q⊑_fQ''$ we have $\outcome{(P''|Q'')|R}=k$ so
  $\outcome{(P|Q'')|R}=k$,
  therefore $\outcome{(P'|Q'')|R}$ is defined and has value $k$, hence
  $\outcome{(P'|Q)|R}=k$ and $P'|Q≃P|Q$.
  Analogous reasoning yields $P'\para Q≃P\para Q$.
  Preservation of equivalence by other syntactic constructs is immediate.
\end{proof}

\begin{definition}
  An ideal is a non-empty set $A$ of partial terms that is downwards closed
  and such that for all $P,Q∈A$ there is an $R∈A$ such that $P⊑R$ and $Q⊑R$.
  The interpretation of an ideal $A$ is the upper limit of the
  interpretations of its elements, that is the partial function $⟦A⟧$
  such that for all $P$, $⟦A⟧(P)=k$ if $⟦Q⟧(P)=k$ for some $Q∈A$.
\end{definition}

The definition of $⟦A⟧$ is valid since interpretation of partial terms is
clearly increasing, when ordering partial terms by refinements and partial
functions by extension (or graph inclusion).
Note that for all total term $P$, the set $\set{Q}{Q⊑_fP}$ is an ideal that
has the same interpretation as $P$.

The set of ideals, ordered by inclusion, is not well founded:
if $(x_n)_{n∈\Nat}$ is an infinite family of names, then each set
$A_n=\set{X⟨x_1,…,x_i⟩}{i≥n}$ is an ideal and $\set{A_n}{n∈\Nat}$ is an
infinite descending chain.
However, if we restrict to a finite number of public names (which does not
change expressiveness, since bound names are not restricted), then the set
is well-founded, and the smallest ideal is the set $Ω$ of all
indeterminates with the set of all free names.

Every syntactic construction for total terms naturally induces a construction
for ideals, for instance the parallel composition $A|B$ is the downwards
closure is $\set{P|Q}{P∈A,Q∈B}$ for $P$ and $Q$ chosen with disjoint
indeterminates.
% \begin{align*}
%   k &:= Ω ∪ \{ k \} \\
%   α.A &:= Ω ∪ \set{ α.P }{ P∈A } \\
%   \place A &:= Ω ∪ \set{ \place P }{ P∈A } \\
%   A|B &:= Ω ∪ \set{ P|Q }{ P∈A, Q∈B, \ind(P)\cap\ind(Q)=∅ } \\
%   A\para B &:= Ω ∪ \set{ P\para Q }{ P∈A, Q∈B, \ind(P)\cap\ind(Q)=∅ } \\
%   \new{x}A &:= Ω ∪ \set{ \new{x}P }{ P∈A }
% \end{align*}
All these constructions are increasing for inclusion.
Moreover, the union of a directed set of ideals is an ideal, so every
equation $X=A(X)$ has a solution in ideals (the least fixed point of
$X\mapsto A(X)$, i.e. the union of the $A^n(Ω)$).
Consequently, all processes definable by recursion are interpreted  by partial
behaviours.

\section{Asynchronous traces} %[[[1
\label{sec-traces}

Simple terms remove one source of ambiguity in the meaning of processes: the
fact that each action may or may not be activated.
By linearity, they also reduce the computation of outcomes to the computation
of the number of non-zero outcomes.
However, they do not form a basis of observable process behaviours, because
they may contain internal transitions, which are not observable and can be a
source of non-determinism.

A trace as defined below can be seen as a deterministic simple term, up to
observational equivalence.
It has visible actions, with a partial order imposed by some internal
prefixing structure, and these actions may not interact with each other, only
with the environment; it also contains inactions, representing the fact that
the choice was made not to do some of the actions.

\begin{definition}\label{def-trace}
  A trace $T$ is a tuple $(\web{T},p,s,≤,N)$ where
  \begin{itemize}
  \item $\web{T}$ is a finite set (the events, or action occurrences),
  \item $p$ is a function from $\web{T}$ to $\Pola$ (the polarity),
  \item $s$ is a function from $\web{T}$ to $\Names\uplus\web{T}$
    (the subject),
  \item $\leq$ is a partial order over $\web{T}$ such that
    $∀a,b∈\web{T}$, if $s(b)=a$ then $a<b$,
  \item $N$ is a finite subset of $\Pola×(\Names\uplus\web{T})$
    (the inactions).
  \end{itemize}
\end{definition}

For an action $a∈\web{T}$, $s(a)$ is the subject of action $a$, that is
the channel on which $a$ happens: if $s(a)∈\Names$ then it is a public
channel, otherwise it is the private channel bound by the action
$s(a)$. The set $N$ indicates which actions could have been performed (in
parallel) after the trace has been consumed.

\subsection{Traces as basic terms} %[[[2

% A trace can be seen as the visible part of a pre-trace, or the visible part of
% an exhaustive pre-trace in a simple term.
% We now formalise this intuition.

\begin{definition}
  Let $P$ be a simple term and let $ρ$ be an exhaustive pre-trace of $P$.
  The trace induced by $ρ$ is the trace $ρ^*$ such that
  \begin{itemize}
  \item $\web{ρ^*}$ is the set of visible transition labels in $ρ$,
  \item $p$ maps labels to their polarity,
  \item $s$ maps labels to their subject, either the name for public channels
    or the action that creates the name for private channels,
  \item $≤$ is the causal order (as of definition~\ref{def-causal-order})
    restricted to visible transition labels,
  \item $N$ is the set of all $(ε,u)$ such that $u^ε.0$ occurs in active
    position in $P/ρ$.
  \end{itemize}
\end{definition}

Note that the condition that $s(b)=a$ implies $a<b$ is satisfied by $ρ^*$,
because in our language the action prefixes are synchronous: in an action
$u(x).P$, the action $u(x)$ that binds $x$ is automatically a prefix of all
actions on $x$.
However, synchrony is not necessary for this property to hold:, the fact that
the name is bound is the important point: even if internal transitions can
occur on a bound name, visible transition are possible only after the name has
been revealed by the action it is bound to.

The definition above identifies the trace that is the observable content of a
pre-trace.
With some coding, we can prove that any trace can be implemented in the
calculus, in the sense that for every trace $T$ there is a term $\implem{T}$
that has a unique exhaustive pre-trace $ρ$, the content of which is $T$.

\begin{definition}
  Let $T$ be a trace.
  For all $a$ and $b$ in $\web{T}$, let $x_{ab},y_{ab},z_a$ be fresh names.
  For all $a∈\web{T}$, let $\action(a)=s(a)^{p(a)}(z_a)$ if $s(a)∈\Names$ and
  $\action(a)=z_{s(a)}^{p(a)}(z_a)$ if $s(a)∈\web{T}$ and define
  \[\textstyle
    A_a^T := \{\lin{x}_{ba}\}_{b<a} . \lin{\action}(a) .
      \biggl( \prod_{a<c} \linb{y}_{ac}
      \biggm\| \prod_{s(c)=a} A_c^T
      \biggm\| \prod_{(ε,a)∈N} z_a^ε.0
      \biggr)
  \]
  where $\{\lin{x}_{ba}\}_{b<a}$ represents a sequence of prefixes that
  contains all actions $\lin{x}_{ba}$ for all $b<a$, in any order.
  The implementation of $T$ is the process
  \[\textstyle
    \implem{ T } := \new{x_{ab}y_{ab}}_{a,b∈\web{T}}
      \biggl( \prod_{s(a)∈\Names} A_a^T
      \biggm| \prod_{a<b} \lin{y}_{ab}.\linb{x}_{ab}
      \biggr)
      \bigm\| \prod_{(ε,u)∈N ,\, u∈\Names} u^ε.0
  \]
\end{definition}

The intuition is the following:
each action in $T$ is translated by the linear action it describes, which
provides the right set of visible actions.
Inactions are translated straightforwardly.
The ordering is imposed by communication on internal names: for each
action $a$, the translation $\lin{\action}(a)$ is prefixed by a blocking
input $x_{ba}$ for each action $b<a$.
Activating this action frees the signals $x_{ac}$ for all $c>a$, which
guarantees that the order is respected.
We cannot implement this system one set of names $x_{ab}$, because the
actions $\lin{\action}(a)$ must be composed without
interaction, in order to avoid internal transitions between actions that are
supposed to implement visible transitions.
We thus split each signal into two names, $x_{ab}$ and $y_{ab}$, and put in
parallel (with interaction) a set of forwarders $y_{ab}.\bar{x}_{ab}$ that
performs the synchronization between signals.
If the subject of an action $a$ is the bound name of an action $b$, then
$\action(a)$ is put in the continuation of action $\action(b)$, which imposes
an order between this action; this is compatible with the constraint $b<a$
from the definition of traces.

\begin{proposition}\label{prop-trace}
  For all trace $T$, the term $\implem{T}$ is simple, has a unique
  exhaustive pre-trace $ρ$ and $ρ^*=T$.
\end{proposition}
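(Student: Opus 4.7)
The plan is to verify the three conjuncts in turn, using the scaffolding of blocking inputs, witnesses and forwarders as the common handle. Simplicity of $\implem{T}$ is immediate by inspection: every subterm of $\implem{T}$ falls under the grammar of definition~\ref{def-simple-term} — namely the unit $1$ (as the leaves of empty products), inactions $z_a^\epsilon.0$ and $u^\epsilon.0$, linear actions $\lin{\alpha}.P$ (for the blockers, the main action $\lin{\action}(a)$, the output witnesses $\lin{\bar{y}}_{ac}$, and the forwarder bodies $\lin{y}_{ab}.\lin{\bar{x}}_{ab}$), the two parallel compositions, and name restriction.

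For existence of an exhaustive pre-trace inducing $T$, I would fix any linear extension of $<_T$ on $\web{T}$ and fire the visible actions in that order. Concretely, to fire $\lin{\action}(a)$: for each $b<_T a$, the output $\lin{\bar{y}}_{ba}$ is already available in the continuation of the previously-fired $\lin{\action}(b)$, so I consume it against $\lin{y}_{ba}$ in the forwarder, then pair the produced $\lin{\bar{x}}_{ba}$ against the blocker $\lin{x}_{ba}$ guarding $\lin{\action}(a)$; this unblocks $\lin{\action}(a)$, which then fires as a visible transition, revealing its bound name $z_a$ and activating both its continuation of witnesses $\lin{\bar{y}}_{ac}$ and any sub-actions $A_c^T$ with $s(c)=a$. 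This schedule triggers every linear action exactly once, no inaction, and no dual-inaction pair remains exposed in $\implem{T}/\rho$; the inactions surviving in active position are precisely the $z_a^\epsilon.0$ and $u^\epsilon.0$ corresponding to elements of $N$. Reading off $\rho^*$: the visible labels of $\rho$ are exactly the occurrences of $\lin{\action}(a)$, in bijection with $\web{T}$; the polarity and subject functions match by the definition of $\action(a)$ (including the case $s(a)\in\web{T}$, where $z_{s(a)}$ is bound exactly by $\action(s(a))$); and the surviving inactions recover $N$.

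The delicate direction is uniqueness together with the fact that the induced causal order on visible labels is exactly $<_T$. Any exhaustive pre-trace must trigger every linear action, including every forwarder, blocker and witness; chasing the chain $\lin{\action}(a)\succ\lin{x}_{ba}\succ\lin{\bar{x}}_{ba}\succ\lin{y}_{ba}\succ\lin{\bar{y}}_{ba}\succ\lin{\action}(b)$ forces any firing of $\lin{\action}(a)$ to be preceded by every $\lin{\action}(b)$ with $b<_T a$, so the causal order on visible labels extends $<_T$. Conversely, for $a,b$ incomparable in $<_T$, the corresponding gating chains live on disjoint restricted names $x_{ba},y_{ba},\ldots$, so the two groups of transitions are independent in the sense of definition~\ref{def-homotopy} and can be swapped by homotopy, which prevents the causal order from strictly extending $<_T$. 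By proposition~\ref{prop-permutation} a pre-trace is determined by its multiset of labels, so any two exhaustive pre-traces that use the intended visible transitions for each event coincide.

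The main obstacle is ruling out pre-traces in which two $\lin{\action}(a)$ and $\lin{\action}(a')$ with dual polarities on a common public subject interact internally rather than visibly. I would handle this by induction on $|\web{T}|$: strip a $<_T$-minimal event, observe that in the smaller term every such internal interaction would leave one of the witness names $w$ forced to fire its dummy branch $w.0$ (since the continuation structure no longer admits the required pairing of $\lin{x},\lin{y}$ forwarders), contradicting exhaustiveness; alternatively, use the basic equivalences of table~\ref{table-basic} to rewrite the offending routing to the intended one and conclude that any such alternative is homotopic to the canonical pre-trace.
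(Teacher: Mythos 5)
Your first three paragraphs track the paper's proof almost step for step: simplicity by inspection, existence of an exhaustive pre-trace by firing the events along a linear extension of $<_T$ through the $y_{ba}$/$x_{ba}$ forwarder chain, the prefix-chasing argument showing the causal order of $ρ$ contains $<_T$, the two-topological-orderings argument showing it contains nothing more, and proposition~\ref{prop-permutation} to identify a pre-trace with its set of labels. All of that is sound and is essentially the paper's argument.

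The problem is your last paragraph. The ``main obstacle'' you isolate --- two actions $\lin{\action}(a)$ and $\lin{\action}(a')$ with dual polarities on a common public subject interacting internally instead of visibly --- cannot arise, and the machinery you propose to rule it out would not work if it could. It cannot arise because the construction forbids it syntactically: the top-level $A_a^T$ are combined by $\prod$, which the paper stipulates to denote composition \emph{without} interaction, and the nested $A_c^T$ with $s(c)=a$ likewise sit under a $\para$ inside $A_a^T$; the only composition \emph{with} interaction that any $\lin{\action}(a)$ is exposed to is the one with the forwarders, whose actions live on the fresh restricted names $x_{ab},y_{ab}$. Since each of these names (and each witness $w$ inside a linear action) carries exactly one linear input and one linear output, every internal transition of $\implem{T}$ is uniquely determined, and every $\action(a)$ can only be consumed by a visible transition. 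That one-line syntactic observation is exactly how the paper obtains uniqueness, and it is the stated reason the signalling is split into the two name families $x_{ab}$ and $y_{ab}$. Your proposed repairs do not substitute for it: the induction stripping a $<_T$-minimal event is left entirely vague about why the ``required pairing'' fails, and the alternative of rewriting via table~\ref{table-basic} is not a legitimate move here, because those equivalences preserve observational equivalence but not the set of pre-traces of a fixed syntactic term, which is what the proposition is about. Replace that paragraph with the direct syntactic argument and the proof is complete.
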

\begin{proof}
  The fact that $\implem{T}$ is simple is obvious by definition of $\implem{T}$.

  Let us first build an exhaustive pre-trace of $\implem{T}$.
  Let $(a_i)_{1≤i≤n}$ be a topological ordering of $\web{T}$.
  We deduce a sequence of terms $(P_i)_{1≤i≤n+1}$ such that for each $i<n$
  there is an interaction from $P_i$ to $P_{i+1}$ made of a transition
  $\action(a_i):ι$ and internal transitions.
  Let $P_1=\implem{T}$.
  Let $i$ be an integer such that $1≤i≤n$, assume $P_i$ is a reduct of
  $\implem{T}$ that contains the $\lin{\action}(a_j)$ for all $j≥i$ and in
  active position all the $\linb{x}_{a_ja_k}$ such that $j<i≤k$ and $a_j<a_k$.
  Then the term $A_{a_i}^T$ occurs in active position in $P_i$ and the
  prefix $\{\lin{x}_{a_ja_i}\}_{a_j<a_i}$ can be consumed, which puts
  $\lin{\action}(a_i)$ in active position.
  We can then apply a transition $\action(a_i):ι$ for some $ι$ followed by an
  internal transition that consumes the $w.1$ contained in the linear action
  (as of definition~\ref{def-linear-action}).
  This puts in active position the $\linb{y}_{a_{i+1}a_j}$ for all
  $a_j>a_i$, and each of these can interact with the
  $\lin{y}_{a_ia_j}.\linb{x}_{a_ia_j}$, which puts in active position the
  $\linb{x}_{a_ia_j}$.
  By this interaction we reach a state $P_{i+1}$ that satisfies the condition
  we assumed on $P_i$.
  Applying this method until $i=n$ gives a term $P_{n+1}$ in which everything
  except the $u^ε.0$ has been consumed, so this provides a exhaustive
  pre-trace $ρ$ of $\implem{T}$.

  Now, let $ρ'$ be another exhaustive pre-trace of $\implem{T}$.
  By definition, $ρ$ and $ρ'$ trigger the same actions in $\implem{T}$.
  From this we can deduce that $ρ$ and $ρ'$ contain the same transition
  labels, indeed the actions $\lin{\action}(a)$ are necessarily consumed
  by visible transitions since they are joined together by a composition
  without interaction and the only composition with interaction they are
  involved in is with the names $x_{ab}$ and $y_{ab}$.
  On the other hand, all actions on these names are consumed by internal
  transitions, and for each such name there is exactly one linear input and
  one linear output so there is only one possible internal transition for each
  name.
  As a consequence the sets of actions of $ρ$ and $ρ'$ are the same so $ρ=ρ'$.

  Let us now prove that $ρ^*=T$.
  The only thing we have to check is that the causal order of $ρ$ is the order
  of $T$.
  First consider two events $a,b∈\web{T}$ with $a<b$.
  The action $\lin{\action}(b)$ in $\implem{T}$ is prefixed by $\lin{x}_{ab}$
  (and possibly other actions), and $\linb{x}_{ab}$ is prefixed by
  $\lin{y}_{ab}$, which is itself prefixed by $\lin{\action}(a)$, so the
  transition $\action(a)$ is before the transition $\action(b)$ in $ρ$.
  Then consider two incomparable events $a$ and $b$.
  There is a topological ordering of $\web{T}$ that places $a$ before $b$ and
  another that places $b$ before $a$, so by the construction above we can
  construct an interaction in $ρ$ for each case, which proves that the
  transitions $\action(a)$ and $\action(b)$ are incomparable in the causal
  order of $ρ$.
\end{proof}

This result justifies that $\implem{T}$ is considered as an implementation of
$T$. The proposition below proves that traces are actually the part of
interactions that are observable by interaction.

\begin{proposition}\label{prop-simple-traces}
  For all simple term $P$, $P≃\bigoplus_{ρ∈\Pretraces_e(P)}\implem{ρ^*}$.
\end{proposition}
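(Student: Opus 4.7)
The plan is to show, for every term $R$, the equality $\outcome{P|R} = \outcome{\bigoplus_{ρ∈\Pretraces_e(P)} \implem{ρ^*}\,|\,R}$; observational equivalence then follows by definition. By the bilinearity of parallel composition over $⊕$ (Proposition \ref{prop-module}), the right-hand side equals $\sum_{ρ∈\Pretraces_e(P)} \outcome{\implem{ρ^*}|R}$, so it suffices to establish $\outcome{P|R} = \sum_{ρ∈\Pretraces_e(P)} \outcome{\implem{ρ^*}|R}$. Since $P$ is simple by hypothesis and each $\implem{ρ^*}$ is simple by Proposition \ref{prop-trace}, every run on either side has outcome $0$ or $1$, and each total outcome counts (in $\K$) those runs that trigger every linear action, trigger no inaction, and leave no interacting dual inactions in the residual.

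I then define a $P$-projection of a run $π∈\Runs(P|R)$: each label $1.a$ contributes the visible event $a$, while each interaction label $(1.ι,2.κ)$ contributes the unique visible $P$-transition at position $ι$ that was consumed. By Proposition \ref{prop-permutation} this map respects homotopy and yields a well-defined set of labels with an induced causal order, so the projection $π|_P$ is a pre-trace of $P$. When $π$ has outcome $1$, I would check that $π|_P$ is exhaustive: every linear $P$-action must have had its $w.1$ witness consumed (otherwise the $w.0$ branch is forced by maximality, yielding outcome $0$); no $P$-inaction can have been triggered (triggering one contributes $0$); and the residual $P/π|_P$ has no $|$-subterm with interacting dual inactions (such a configuration would be forced by maximality to reduce to $0$ in any completion). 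Conversely, every $ρ∈\Pretraces_e(P)$ arises this way. Partitioning outcome-$1$ runs by their projection gives $\outcome{P|R} = \sum_{ρ∈\Pretraces_e(P)} N(ρ,R)$, where $N(ρ,R)$ counts outcome-$1$ runs of $P|R$ with $P$-projection $ρ$.

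It then remains to show $N(ρ,R) = \outcome{\implem{ρ^*}|R}$ for each $ρ$. By Proposition \ref{prop-trace}, $\implem{ρ^*}$ has a unique exhaustive pre-trace $σ$, and $σ^* = ρ^*$; so every outcome-$1$ run of $\implem{ρ^*}|R$ projects to $σ$ on the left. I would exhibit the bijection explicitly: the common trace $ρ^* = σ^*$ provides a canonical identification of visible events in $ρ$ with visible events in $σ$; relabelling each interaction $(1.ι,2.κ)$ of an outcome-$1$ run of $P|R$ through this identification, and supplementing with the internal transitions on the signals $x_{ab}, y_{ab}$ and on the witnesses $w$ of $\implem{ρ^*}$ — all of which are forced once $σ$ is fixed, by the uniqueness part of Proposition \ref{prop-trace} — produces the corresponding run of $\implem{ρ^*}|R$. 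The main obstacle is verifying that causality, homotopy and the outcome-$1$ condition all transfer faithfully across this correspondence: this hinges on the fact that both residuals $P/ρ$ and $\implem{ρ^*}/σ$ expose precisely the inaction set $N$ carried by the trace $ρ^*$, so the same families of interactions with $R$ are admissible on each side, and the causal order of $ρ^* = σ^*$ imposes the same compatibility constraints on those interactions. Summing the matched counts then yields the desired equality.
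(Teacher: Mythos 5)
Your proposal follows essentially the same route as the paper's own proof: project each run of $P|R$ onto its $P$-component, characterise the successful runs as those whose projection is an exhaustive pre-trace of $P$, partition the runs by that projection, and match each class bijectively (and outcome-preservingly) with the runs of $\implem{ρ^*}|R$ via the unique exhaustive pre-trace supplied by Proposition~\ref{prop-trace}. The one step that does not survive as written is the claim that for an \emph{arbitrary} test $R$ every run of $P|R$ has outcome $0$ or $1$, so that $\outcome{P|R}$ counts outcome-$1$ runs: this is only true when $R$ is itself simple, since the outcome of a run is the product of all scalars left in the residual and $R$ may contribute any $k∈\K$. The paper sidesteps this by testing only against simple terms $Q$, which is legitimate because of the remark at the end of Section~\ref{sec-testing} that two terms are equivalent if and only if they yield equal outcomes against all \emph{simple} tests. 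You should either invoke that reduction explicitly, or strengthen your final bijection to note that it leaves the $R$-component of the residual unchanged and hence preserves the outcome whatever its value; with either repair your argument is sound and coincides with the paper's.
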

\begin{proof}
  Let $Q$ be a simple term.
  The term $P|Q$ is simple, so the outcome of a run of this term
  is either $1$ or $0$.
  Let $ρ$ be a run with outcome $1$.
  This implies that no inaction of $P|Q$ is triggered in $ρ$ and
  that each linear action is triggered.

  Call $ρ¹$ the projection of $ρ$ on $P$.
  Formally, $ρ¹$ is obtained from $ρ$ by replacing each transition $1.a$
  with $a$, removing every transition $2.a$ and replacing each transition
  $(1.ι,2.κ)$ with the $α:ι$ that is the left premise of the derivation of
  $(1.ι,2.κ)$.
  Note that the order on $ρ¹$ need not be the restriction to $ρ¹$ of the order
  on $ρ$, it is only a subset of this order.
  Call $ρ²$ the analogous projection on $Q$.

  Since the outcome of $ρ$ is $1$, the pre-traces $ρ¹$ and $ρ²$ are exhaustive
  pre-traces of $P$ and $Q$ respectively.
  Let $ρ'$ be the unique exhaustive pre-trace of $\implem{(ρ¹)^*}$.
  By construction, there is a bijection between the positions of the visible
  actions of $ρ¹$ and those of $ρ'$, which establishes a bijection between
  runs of $P|Q$ with outcome $1$ that project on $P$ as $ρ¹$:
  \[
    \set{θ}{θ∈\Runs(P|Q), \; θ¹=ρ¹, \; s((P|Q)/θ)=1}
  \]
  and runs of $\implem{(ρ¹)^*}|Q$ with outcome $1$:
  \[
    \set{θ}{θ∈\Runs(\implem{(ρ¹)^*}|Q), \; s((\implem{(ρ¹)^*}|Q)/θ)=1}.
  \]
  This bijection preserves outcomes, so we have
  \[
    \sum_{\substack{σ∈\Runs_e(P|Q) \\ σ¹=ρ¹}} s((P|Q)/σ)
    = \outcome{\implem{(ρ¹)^*}|Q}
  \]
  Summing for all potential values of $ρ¹$, i.e. all exhaustive
  pre-traces of $P$, yields
  \[
    \outcome{P|Q}
    = \sum_{ρ¹∈\Pretraces_e(P)}
      \sum_{\substack{σ∈\Runs_e(P|Q) \\ σ¹=ρ¹}} s((P|Q)/σ)
    = \sum_{ρ¹∈\Pretraces_e(P)} \outcome{\implem{(ρ¹)^*}|Q}
  \]
  from which we can conclude.
\end{proof}

We can thus consider traces as terms of the language.
Indeed, given a trace $T$, all simple terms that have a unique exhaustive
pre-trace $ρ$ with $ρ^*=T$ are equivalent to $\implem{T}$.
When precise syntactic information is needed, $T$ used as a term is a
short-hand for $\implem{T}$. 

\begin{theorem}
  Every term is equivalent to a linear combination of traces.
\end{theorem}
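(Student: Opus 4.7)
The plan is to combine the two decompositions already established: Proposition~\ref{prop-affine-action}, which resolves the affineness of action prefixes via $α.P≃\lin{α}.P⊕α.0$, and Proposition~\ref{prop-simple-traces}, which expresses any simple term as a sum of implementations of traces. The argument proceeds in two stages. First, I would prove by structural induction on the grammar of Definition~\ref{def-terms} that every term is equivalent to a linear combination of simple terms (in the sense of Definition~\ref{def-simple-term}). Then, applying Proposition~\ref{prop-simple-traces} to each simple summand and invoking linearity once more yields a linear combination of traces.

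For the first stage, I would go through the constructors. An outcome $k$ is simply $k⋅1$, where $1$ is the empty parallel composition, a simple term. The place-holder $\place P≃P$ by Table~\ref{table-basic}, so the induction hypothesis applies directly. For an action prefix $α.P$, the induction hypothesis provides $P≃\bigoplus_i k_i⋅P_i$ with each $P_i$ simple; Proposition~\ref{prop-affine-action} gives $α.P≃\lin{α}.P⊕α.0$, and the linearity of $\lin{α}.\_$ (also in Proposition~\ref{prop-affine-action}) distributes to yield $\lin{α}.P≃\bigoplus_i k_i⋅\lin{α}.P_i$, where each $\lin{α}.P_i$ is simple by construction, and $α.0$ is simple. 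For the parallel compositions $P|Q$ and $P\paraQ$ and for hiding $\new{x}P$, the induction hypothesis provides sums of simple terms on each side, and one invokes the bilinearity of the compositions and linearity of hiding recorded in Table~\ref{table-module} to distribute across the sums, giving a linear combination of simple terms (compositions and hidings of simple terms remain simple).

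For the second stage, given $P≃\bigoplus_i k_i⋅P_i$ with each $P_i$ simple, Proposition~\ref{prop-simple-traces} gives $P_i≃\bigoplus_{ρ∈\Pretraces_e(P_i)}\implem{ρ^*}$. Since each $ρ^*$ is a trace, and since the convention introduced after Proposition~\ref{prop-simple-traces} identifies a trace $T$ with its implementation $\implem{T}$, another application of linearity concludes:
\[
  P ≃ \bigoplus_i \bigoplus_{ρ∈\Pretraces_e(P_i)} k_i⋅ρ^*,
\]
a linear combination of traces.

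The only genuinely delicate point is the first stage, and within it the action case: one must be careful that the distribution of $\lin{α}.\_$ over $⊕$ and over scalar multiplication preserves simpleness, which is ensured because the body of the resulting linear actions are themselves simple by the induction hypothesis. The remaining cases are routine applications of the equations in Tables~\ref{table-basic} and~\ref{table-module}, all of which have been proved before.
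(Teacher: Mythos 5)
Your proposal is correct and follows essentially the same two-stage route as the paper: first decompose every term into a linear combination of simple terms via Proposition~\ref{prop-affine-action} and the (bi)linearity laws, then apply Proposition~\ref{prop-simple-traces} to each simple summand. The paper states the first stage as an immediate remark rather than spelling out the structural induction, but the content is the same.
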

\begin{proof}
  By the decomposition of affine actions from
  proposition~\ref{prop-affine-action} we get that every term is equivalent to
  a linear combination of simple terms.
  By proposition~\ref{prop-simple-traces}, each simple term is in turn
  equivalent to a sum of trace implementations.
  The composition of these equivalences, with the module structure of
  $\Procs_\K$, yields a decomposition of every term as a linear combination of
  trace implementations.
\end{proof}

% As a consequence, two terms $P$ and $Q$ are equivalent if and only if
% $\outcome{P|T}=\outcome{Q|T}$ for all trace $T$.

\subsection{Pure trace semantics} %[[[2

We can thus define a semantics of processes based on traces, as of
definition~\ref{def-trace}, by reformulating the various constructions for
combinations of traces.
% In this section, we consider the case where $\K$ is $\Nat$, the semiring of
% natural numbers.
As an example we give a reformulation of testing for traces.
In the definition below, for two traces $T$ and $U$, if $f$ is a function
from $\web{T}$ to $\web{U}$, then $f$ is implicitly extended to a function from
$\web{T}\uplus\Names$ to $\web{U}\uplus\Names$ as the identity over names.

% \begin{definition}
%   Let $T$ and $U$ be two traces.
%   A sequentialization of $T$ into $U$ is a bijection $f$ from $\web{T}$ to
%   $\web{U}$ such that
%   \begin{itemize}
%   \item for all $a∈\web{T}$, $p_U(f(a))=p_T(a)$ and $s_U(f(a))=f(s_T(a))$,
%   \item for all $a,b∈\web{T}$, if $a≤_Tb$ then $f(a)≤_Uf(b)$,
%   \item for all $(ε,x)∈N_T$, $(ε,f(x))∈N_U$.
%   \end{itemize}
%   A synchronization of $T$ and $U$ is a bijection $σ$ from $\web{T}$ to
%   $\web{U}$ such that
%   \begin{itemize}
%   \item for all $a∈\web{T}$, $p_U(σ(a))=¬p_T(a)$ and $s_U(σ(a))=σ(s_T(a))$,
%   \item the relation $\set{(a,b)}{a≤_Tb\text{ or }σ(a)≤_Uσ(b)}$ is acyclic,
%   \item for all $(ε,x)∈N_T$, $(¬ε,σ(x))∉N_U$.
%   \end{itemize}
%   The set of sequentializations is written $\Seq(T,U)$,
%   the set of synchronizations is written $\Sync(T,U)$.
%   Two traces $T$ and $U$ and isomorphic if $\Seq(T,U)≠∅$ and $\Seq(U,T)≠∅$.
%   Let $\Traces$ be the set of traces up to isomorphism.
% \end{definition}
% 
% Remark that sequentializations and synchronizations compose: let $T$, $U$ and
% $V$ be three traces, if $f∈\Seq(T,U)$ and $g∈\Seq(U,V)$ then $g∘f∈\Seq(T,V)$,
% moreover if $f∈\Seq(T,U)$ and $σ∈\Sync(U,V)$ then $σ∘f∈\Sync(T,V)$.

\begin{proposition}\label{prop-synchro}
  Let $T$ and $U$ be two traces, $\outcome{T|U}$ is the number of
  synchronizations of $T$ and $U$, that is
  bijections $σ$ from $\web{T}$ to $\web{U}$ such that
  \begin{itemize}
  \item for all $a∈\web{T}$, $p_U(σ(a))=¬p_T(a)$ and $s_U(σ(a))=σ(s_T(a))$,
  \item the relation $\set{(a,b)}{a≤_Tb\text{ or }σ(a)≤_Uσ(b)}$ is acyclic,
  \item for all $(ε,x)∈N_T$, $(¬ε,σ(x))∉N_U$.
  \end{itemize}
%   For all traces $T$ and $U$, $\outcome{T|U}=♯\Sync(T,U)$.
\end{proposition}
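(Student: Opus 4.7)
The plan is to use $\implem{T}$ and $\implem{U}$ as canonical representatives; by Proposition~\ref{prop-trace} they are simple, hence so is $\implem{T}|\implem{U}$, and every one of its runs carries outcome $0$ or $1$. Consequently $\outcome{T|U}$ is just the number of runs of $\implem{T}|\implem{U}$ with outcome $1$, and the task reduces to constructing an explicit bijection between those runs and the synchronizations of $T$ and $U$.

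For the forward direction, given a synchronization $σ$ I would build such a run as follows. Acyclicity of the relation $\set{(a,b)}{a≤_Tb\text{ or }σ(a)≤_Uσ(b)}$ yields a linear extension of $\web{T}$ that is compatible both with $≤_T$ and, via $σ$, with $≤_U$. Following this order I fire each pair $a, σ(a)$ as a synchronising interaction on the common subject channel: the subject condition $s_U(σ(a))=σ(s_T(a))$ ensures that this channel is well defined — it is either a common free name, or a bound name whose binder $s_T(a)$ was already fired earlier thanks to the constraint $s(b)=a\Rightarrow a<b$ from Definition~\ref{def-trace} — and the polarity condition ensures that the two half-actions are duals. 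The internal scaffolding $x_{ab}/y_{ab}$ and the linear-action witnesses $w$ are consumed between these visible interactions exactly as in the proof of Proposition~\ref{prop-trace}. The non-conflict condition on $N_T$ and $N_U$ ensures that, once all linear actions have been fired, no two dual inactions remain in active position, so that the outcome is $1$ rather than being annihilated by a forced interaction between inactions under maximality.

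For the backward direction, let $ρ$ be a run of $\implem{T}|\implem{U}$ with outcome $1$. Simplicity forces every linear action $\lin{\action}(a)$ on either side to be triggered and no inaction to be. Because the linear prefixes $\lin{\action}(a)$ inside each implementation are composed with $\para$, they cannot interact with one another and must each be consumed by an interaction across the outer $|$ with a unique $\lin{\action}(b)$ from the opposite side; this assignment defines $σ(a):=b$, and bijectivity follows by counting. The interaction rule of the πI-calculus then forces the polarity and subject conditions: in particular, when $s_T(a)=b$, the channel of $\action(a)$ is the bound name $z_b$ revealed by $\action(b)$, which is communicated across to the $U$-side by the interaction realising the pair $b, σ(b)$ and later consumed by $\action(σ(a))$, giving $s_U(σ(a))=σ(b)=σ(s_T(a))$. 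The causal order of $ρ$, restricted to visible transitions, is a partial order extending both $≤_T$ and $\set{(a,b)}{σ(a)≤_Uσ(b)}$, which implies acyclicity of their union. Finally, if some $(ε,x)∈N_T$ were paired with $(¬ε,σ(x))∈N_U$, the reduct $(\implem{T}|\implem{U})/ρ$ would contain $x^ε.0$ and $x^{¬ε}.0$ in active position, which by maximality of $ρ$ would have had to interact, annihilating the outcome and contradicting the hypothesis.

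The principal obstacle lies in the bookkeeping around the internal scaffolding and the alternation between $\para$ and $|$. One must verify that the freshness convention on $x_{ab}, y_{ab}, w, z_a$ rules out any spurious cross-interaction between the private mechanisms of $\implem{T}$ and $\implem{U}$, and that the schedule constructed in the forward direction genuinely lifts to a valid transition sequence in the presence of the $y_{ab}.\linb{x}_{ab}$ forwarders. The bound-name case of the subject condition is the delicate point: one must track that the private name $z_b$ revealed on the $T$-side is transported to the $U$-side precisely by the transition realising $b, σ(b)$, so that the later actions $\action(a)$ and $\action(σ(a))$ do indeed synchronise on the same channel.
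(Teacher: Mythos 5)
Your proposal is correct and follows essentially the same route as the paper's proof: reduce to counting the successful runs of $\implem{T}|\implem{U}$ (the paper does this via Proposition~\ref{prop-simple-traces} and the observation that $\outcome{V}=1$ iff $\web{V}=∅$, you do it directly from simplicity — these are equivalent), then exhibit the same two-way correspondence between such runs and synchronizations, using the linear extension given by acyclicity in one direction and the forced cross-interactions of the $\para$-composed linear actions in the other. The delicate points you flag (transport of bound subjects, the $x_{ab}/y_{ab}$ scaffolding, maximality versus dual inactions) are handled in the paper at the same level of detail you sketch.
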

\begin{proof}
  Let $P=\implem{T}|\implem{U}$.
  By proposition~\ref{prop-simple-traces} we have
  $P≃\bigoplus_{ρ∈\Pretraces_e(P)}\implem{ρ^*}$ hence
  $\outcome{P}=\sum_{ρ∈\Pretraces_e(P)}\outcome{\implem{ρ^*}}$.
  Clearly, for all trace $V$, $\outcome{V}$ is $1$ if $\web{V}=∅$ and
  $0$ otherwise, so $\outcome{P}$ is the number pre-traces $ρ∈\Pretraces_e(P)$
  such that $\web{ρ^*}$ is empty.

  Consider such a pre-trace $ρ$, by definition $ρ$ triggers all linear actions
  in $\implem{T}$ and $\implem{U}$, so $ρ^1$ and $ρ^2$ are the unique
  exhaustive pre-traces of $\implem{T}$ and $\implem{U}$ respectively.
  The relation $\set{(ι,κ)}{(1.ι,2.κ)∈ρ}$ establishes a bijection between
  positions of actions in $\implem{T}$ and $\implem{U}$, which implies a
  bijection $σ:\web{T}→\web{U}$.
  Clearly, for all $a∈\web{T}$, we have $p_U(σ(a))=¬p_T(a)$ since an action
  can only interact with an action of the opposite polarity.
  It is also easy to prove that $s_U(σ(a))=σ(s_T(a))$, since two actions that
  interact are either on the same public name or on private names that are
  unified by the interaction of previous actions.
  Consider two transitions $(ι_a,κ_a)$ and $(ι_b,κ_b)$ in $ρ$, that correspond
  to the pairs of actions $(a,σ(a))$ and $(b,σ(b))$: if $a≤_Tb$
  then the action at $ι_a$ must occur before the action at $ι_b$, so
  $(ι_a,κ_a)≤(ι_b,κ_b)$; the same argument applies if $σ(a)≤_Uσ(b)$, so the
  order $≤_ρ$ contains the orders $≤_T$ and $≤_U$, which proves that the union
  of these orders is acyclic.
  Finally, if there were $ε∈\Pola$ and $x∈\Names\uplus\web{T}$ such that
  $(ε,x)∈N_T$ and $(¬ε,σ(x))∈N_U$, then the run $ρ$ could be extended with an
  interaction between the inactions associated with them, and the outcome would
  be $0$.
  Therefore $σ∈\Sync(T,U)$.

  Reciprocally, let $σ$ be a synchronization of $T$ and $U$.
  The relation $\set{(a,b)}{a≤_Tb\text{ or }σ(a)≤_Uσ(b)}$ is acyclic,
  so there is a non-decreasing enumeration $\web{T}=\{a_1,…,a_n\}$ such that
  $σ(a_1),…,σ(a_n)$ is also non-decreasing.
  Then there is a path $p∈\implem{T}$ that reaches $a_1,…,a_n$ in this order
  and a run $q∈\implem{U}$ that reaches $σ(a_1),…,σ(a_n)$ in this order.
  By combining $p$ and $q$ we get a path $r∈P$.
  Indeed, for each $i$ we have $p(σ(a_i))=¬p(a_i)$ and $s(σ(a_i))=σ(s(a_i))$
  so either $a_i$ and $σ(a_i)$ have the same public name as subject, or their
  subjects are two bound names $z_{s(a_i)}$ and $z_{s(σ(a_i))}$.
  Since $s(a_i)<a_i$ by definition, there is $j<i$ such that $s(a_i)=a_j$ and
  then $s(σ(a_i))=σ(a_j)$, so the subjects of $a_i$ and $σ(a_i)$ are unified
  by the interaction between $a_j$ and $σ(a_j)$.
  In any case, the actions $a_i$ and $σ(a_i)$ can interact.
  The term $P/r$ is the composition with interaction of $\implem{T}/p$ and
  $\implem{U}/q$, and these terms are compositions without interaction of the
  inactions that correspond to $N_T$ and $N_U$ respectively.
  The condition that $(¬ε,σ(x))∉N_U$ for all $(ε,x)∈N_T$ guarantees that no
  further interaction can occur, therefore $r$ is a maximal path of $P$ and
  $s(P/r)=1$.

  These construction establish a bijection between $\Sync(T,U)$ and the runs
  of $\implem{T}$ and $\implem{U}$ with outcome $1$, which proves the expected
  result.
\end{proof}

% \begin{proposition}
%   If $\K=\Nat$, if $T$ and $U$ are non-isomorphic traces, then $T\not≃U$.
% \end{proposition}
% \begin{proof}
%   Let $T_0$ be the trace such that $\web{T_0}=\web{T}$, $p_{T_0}=p_T$,
%   $s_{T_0}=s_T$, $≤_{T_0}$ is the minimal order that satisfies the constraint
%   (if $s(a)=b$ then $b<a$) and $N_{T_0}=∅$.
%   Let $T'_0$ be the trace $T_0$ with polarities negated.
%   Obviously there is a sequentialization in $\Seq(T_0,T)$, and
%   $\outcome{T|T'_0}≠0$.
%   If there is no sequentialization in $\Seq(T_0,U)$, then $T'_0$ separates $T$
%   and $U$.
%   Otherwise, up to isomorphism, we can assume we have $\web{U}=\web{T}$.
% 
%   If the orders $≤_T$ and $≤_U$ are not equal, then there are actions
%   $a,b∈\web{T}$ such that $a≤_Ub$ and $a\not\leq_Tb$ (or the same with $T$ and
%   $U$ exchanged).
%   Then if we choose $V$ to be $T'_0$ except that $≤_V$ is the minimal order
%   such that $a>_Vb$, we get $\Sync(T,V)≠∅$ and $\Sync(U,V)=0$, hence
%   $\outcome{T|V}≠0$ and $\outcome{U|V}=0$.
% 
%   Otherwise, the sets $N_T$ and $N_U$ are not equal.
%   Let $(ε,x)∈N_T∖N_U$ (if $N_T⊂N_U$ then exchange $T$ and $U$).
%   Let $V$ be the trace $T'_0$ except that $N_V=\{(¬ε,x)\}$, then 
%   we get $\outcome{T|V}=0$ and $\outcome{U|V}≠0$.
% \end{proof}

We will not develop the trace semantics further here for lack of space, but
the abstract reformulation of outcomes above gives an idea of the
construction: a finite process is interpreted as a linear combination of
traces and all basic operations are defined independently of the semiring
$\K$.
The linear action prefix maps traces to traces,
inactions are basic traces,
composition without interaction is a disjoint union of traces,
composition with interaction maps a pair of traces to a combination of traces
with integer coefficients,
hiding $\new{u}$ maps traces that contain an action on $u$ to $0$, and remove
inactions on $u$ from other traces.

\section{Classic forms of test} %[[[1
\label{sec-classic}

By choosing appropriate structures for $\K$, we
can recover the standard may and must testing.
In both cases we have $\K=\implem{0,1,ω}$, where $ω$ represents success.
Table~\ref{table-may-must} show the rules for addition and multiplication for
may and must.
Using this definition it is clear that $P$ and $Q$ are equivalent for may or
must testing if and only if, for all $R$, $\outcome{P|R}=ω$ if and only if
$\outcome{Q|R}=ω$.
Taking for $\K$ the minimal semiring $\{0,1\}$ with $1+1=1$ gives the
framework studied by the author in a previous work~\cite{bef08:apc}, which
also leads to must testing semantics.

\begin{table}
  \centering
  \begin{tabular}{c@{\qquad}c@{\qquad}c}
    may and must &
    may testing &
    must testing \\[.5ex]
    $ \begin{array}{c|ccc}
        ⋅ & 0 & 1 & ω \\ \hline
        0 & 0 & 0 & 0 \\
        1 & 0 & 1 & ω \\
        ω & 0 & ω & ω
      \end{array} $ &
    $ \begin{array}{c|ccc}
        + & 0 & 1 & ω \\ \hline
        0 & 0 & 1 & ω \\
        1 & 1 & 1 & ω \\
        ω & ω & ω & ω
      \end{array} $ &
    $ \begin{array}{c|ccc}
        + & 0 & 1 & ω \\ \hline
        0 & 0 & 1 & ω \\
        1 & 1 & 1 & 1 \\
        ω & ω & 1 & ω
      \end{array} $
  \end{tabular}
  \smallskip
  \caption{Observation semirings for may and must testing.}
  \label{table-may-must}
\end{table}

These semirings share an important property, namely that all elements are
idempotent for addition.
This is an important restriction, in particular it implies that summing
outcomes cannot count the number of successes or failures.
In other words, the “quantitative” part of our testing semantics disappears.
We can remark that this constraint imposes to forget non-interleaving, since
it allows us to decompose everything as totally ordered traces.

\begin{theorem}
  If $∀x,x+x=x$, then $\Procs_\K$ is generated by totally
  ordered traces.
\end{theorem}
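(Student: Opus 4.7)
The plan is to reduce to the previous theorem. Since every term is already equivalent to a $\K$-linear combination of traces, it suffices to show that each single trace $T = (\web{T}, p, s, \le, N)$ is equivalent to a sum of \emph{totally ordered} traces. The natural candidates are the linear extensions of $T$: let $T_1, \ldots, T_n$ be the traces obtained by replacing $\le$ by each total order on $\web{T}$ that extends it, keeping all other data unchanged. Each $T_i$ is totally ordered, and I would prove $T \simeq T_1 \oplus \cdots \oplus T_n$.

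By bilinearity of composition and the previous theorem, this equivalence reduces to checking $\outcome{T|U} = \sum_i \outcome{T_i|U}$ for every trace $U$. Proposition~\ref{prop-synchro} rewrites both sides as counts of synchronizations in $\K$, so the task is set-theoretic: show $\Sync(T, U) = \bigcup_i \Sync(T_i, U)$. The inclusion $\supseteq$ is immediate, since enlarging the causal order of $T$ only strengthens the acyclicity condition on synchronizations. For $\subseteq$, given $\sigma \in \Sync(T, U)$, the relation $\set{(a,b)}{a \le_T b \text{ or } \sigma(a) \le_U \sigma(b)}$ is acyclic by definition of synchronization, so it admits a linear extension $\le_i$ on $\web{T}$; this $\le_i$ extends $\le_T$, and $\sigma$ remains a synchronization for the resulting totally ordered trace $T_i$.

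The idempotency hypothesis enters at the last step. In any semiring with $x + x = x$, one has $1 + 1 = 1$ and therefore $\sum_{\sigma \in S} 1 \in \K$ equals $1$ when $S$ is non-empty and $0$ otherwise, for any finite set $S$. Applied to both sides, $\outcome{T|U}$ and $\sum_i \outcome{T_i|U}$ both evaluate to $1$ or $0$ according to whether the corresponding synchronization set is non-empty, and by the set equality above these two conditions coincide.

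The main technical point is the linearization argument in the second paragraph; the remainder is bookkeeping. A revealing subtlety is that a given $\sigma$ typically lies in $\Sync(T_i, U)$ for many $i$, so the decomposition $T \simeq \bigoplus_i T_i$ genuinely over-counts synchronizations outside the idempotent setting: the collapse $1 + 1 = 1$ is essential, and obtaining an analogous decomposition in a non-idempotent semiring would require a canonical choice of linearization per synchronization — precisely the kind of refinement the introduction suggests as an open question.
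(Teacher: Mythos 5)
Your proposal is correct and follows essentially the same route as the paper: decompose each trace as the sum of its linear extensions, use Proposition~\ref{prop-synchro} to reduce the equivalence to a statement about synchronizations, extract a linear extension from the acyclic union relation for one inclusion, and invoke idempotency of addition to collapse the counts to a non-emptiness test. Your reformulation via the set identity $\Sync(T,U)=\bigcup_i\Sync(T_i,U)$ and the closing remark about over-counting in non-idempotent semirings are accurate glosses on the same argument.
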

\begin{proof}
  We prove the equivalent fact that each trace is equivalent to the sum of all
  its total orderings.
  Let $T$ and $U$ be two traces.
  Call $\?O$ the set of total orders over $\web{T}$ that contain $≤_T$ and for
  each $R∈\?O$, call $T_R$ the trace obtained from $T$ by replacing the order
  with $R$.
  Let $T'=\bigoplus_{R∈\?O}T_R$.

  By proposition~\ref{prop-synchro}, $\outcome{T|U}$ is the number of
  matchings between $T$ and $U$.
  This means that if there are $n$ matchings, then $\outcome{T|U}=1+\cdots+1$
  with $n$ occurrences of $1$.
  By hypothesis $1+1=1$, so $\outcome{T|U}$ is $1$ if there is at least one
  matching and $0$ otherwise.
  By the same argument, for all $R∈\?O$ we have $\outcome{T_R|U}∈\implem{0,1}$,
  hence $\outcome{T'|U}$ is $1$ if there is at least one $R$ such that
  $\outcome{T_R|U}=1$ and $0$ otherwise.

  Assume $\outcome{T|U}=1$, and let $σ$ be a matching between $T$ and $U$.
  Then $σ$ induces an order $≤$ on $\web{T}$ such that $a≤_Tb$ and $σ(a)≤_Uσ(b)$
  both imply $a≤b$.
  Any completion $R$ of $≤$ into a total order yields a total ordering $T_R$
  of $T$ such that $\outcome{T_R|U}=1$, which proves that $\outcome{T'|U}=1$.

  Reciprocally, assume that $\outcome{T'|U}=1$, then there is an $R∈\?O$ such
  that $\outcome{T_R|U}=1$, then there is a matching $σ$ between $T_R$ and
  $U$.
  Since the only difference between $T_R$ and $T$ is the order and $≤_T$ is
  included in $R$, $σ$ is also a matching between $T$ and $U$, hence
  $\outcome{T|U}=1$.
\end{proof}

%]]]1

\bibliographystyle{plain}
\bibliography{eb}

\end{document}